\newtheorem{theorem}{Theorem}
\newtheorem{definition}{Definition}
\begin{document}
%
\title{UNMAS: Multi-Agent Reinforcement Learning for Unshaped Cooperative Scenarios}
%
%
%

\author{Jiajun~Chai, ~\IEEEmembership{Student~Member,~IEEE,}
        Weifan~Li,~\IEEEmembership{Student~Member,~IEEE,}
        Yuanheng~Zhu,~\IEEEmembership{Senior Member,~IEEE,}
        and~Dongbin~Zhao,~\IEEEmembership{Fellow,~IEEE}
        Zhe~Ma,
        Kewu~Sun,
        Jishiyu~Ding
\thanks{J. Chai, W. Li, Y. Zhu, and D. Zhao are with the State Key Laboratory of Management and Control for Complex Systems, Institute of Automation, Chinese Academy of Sciences, Beijing 100190, China, and are also with the School of Artificial Intelligence, University of Chinese Academy of Sciences, Beijing 100049, China. Z. Ma, K. Sun, and J. Ding are with Xlab, the second academy of CASIC, Beijing 100854, China.}
\thanks{J. Chai and W. Li contribute equally to this paper.}}

\maketitle

\begin{abstract}
Multi-agent reinforcement learning methods such as VDN, QMIX, and QTRAN that adopt centralized training with decentralized execution (CTDE) framework have shown promising results in cooperation and competition. However, in some multi-agent scenarios, the number of agents and the size of action set actually vary over time. We call these \emph{unshaped scenarios}, and the methods mentioned above fail in performing satisfyingly.
In this paper, we propose a new method called Unshaped Networks for Multi-Agent Systems (UNMAS) that adapts to the number and size changes in multi-agent systems. We propose the self-weighting mixing network to factorize the joint action-value. Its adaption to the change in agent number is attributed to the nonlinear mapping from each-agent Q value to the joint action-value with individual weights. Besides, in order to address the change in action set, each agent constructs an individual action-value network that is composed of two streams to evaluate the constant environment-oriented subset and the varying unit-oriented subset. We evaluate UNMAS on various StarCraft II micro-management scenarios and compare the results with several state-of-the-art MARL algorithms. The superiority of UNMAS is demonstrated by its highest winning rates especially on the most difficult scenario 3s5z\_vs\_3s6z. The agents learn to perform effectively cooperative behaviors while other MARL algorithms fail in. Animated demonstrations and source code are provided in https://sites.google.com/view/unmas.
\end{abstract}

\begin{IEEEkeywords}
multi-agent, reinforcement learning, StarCraft II, centralized training with decentralized execution.
\end{IEEEkeywords}

\IEEEpeerreviewmaketitle

\section{Introduction}
\IEEEPARstart{M}{ulti-agent} reinforcement learning (MARL) employs reinforcement learning to solve the multi-agent system problems. With cooperative multi-agent systems playing an increasingly important role, such as controlling robot swarms with limited sensing capabilities \cite{Cao2013a, Jiang2019}, mastering build order production \cite{Tang2018b} and micro-management task in real-time strategy (RTS) games \cite{Shao2019, Rashid2018, Tang2019}, dispatching ride requests \cite{Li}, autonomous driving \cite{Zhu2018, Liang2019, Li2019a}, and so on, MARL attracts the attention of many researchers. MARL algorithms face the problem of huge action space, non-stationary environment, and global exploration. Although many algorithms have been proposed to solve these problems, it is still an open problem.

To address cooperative tasks, one class of MARL methods is independent learning that allows the agents to learn independently \cite{Palmer2018, Tampuu2017, Gupta2017}. Independent learning suffers from the non-stationarity because other agents are also impacting the environment. Another class is centralized learning that takes the multi-agent system as a whole \cite{Foerster2018, Shao2018a, Zhang2016}. \emph{Centralized training with decentralized execution} (CTDE) is a compromise between independent and centralized learning \cite{Zhang2018, Sunehag2018, Rashid2018}. It provides local autonomy to agents by decentralized execution and mitigates the problem of the non-stationary environment by centralized training. There are many methods adopting CTDE framework \cite{Wang2020, Yang2020, Wang2020a, Yang2020a}, including QMIX \cite{Rashid2018}, VDN \cite{Sunehag2018}, QTRAN \cite{Son2019}, and so on.

\emph{Unshaped scenario} is defined as the scenario where the number of units and the size of action set change over time, which is a common multi-agent scenario.  \cite{Sui2020} solves the problem of formation control in face of an uncertain number of obstacles. DyAN \cite{wang2020few} reconstructs the agent observation into information related to environment and units, and ASN \cite{wang2020action} categorizes the agent action set considering the semantics between actions. Although the mentioned methods adapt to unshaped scenarios, there is still room for improvement. 

StarCraft II micro-management is a widely used experimental environment to test MARL methods \cite{Tang2019}. It is actually an unshaped scenario, in which the death of agents in combat leads to the change in their number. In addition, since the enemy can also be killed, the size of attack action subset of each unit changes as well. However, the existing methods \cite{Rashid2018, Son2019} ignore the variation of action set and still provide action-values for invalid actions. This may cause miscalculation of the action-value among all agents. In addition, as their joint action-value functions still collect the meaningless action-values of dead agents, the joint action-value may also be miscalculated.

\subsection{Contribution}
In this paper, we focus on the unshaped cooperative multi-agent scenarios, in which the number of agents and the size of action set change over time. Our contribution to the uncertain challenge of MARL is twofold. (i) The self-weighting mixing network is proposed with the network input dimension adapting with the number of agents, so that the joint action-value is estimated more accurately. (ii) The action set of an agent is divided into two subsets: \emph{environment-oriented} subset and \emph{unit-oriented} subset. We use two network streams, the \emph{environment-oriented stream} and \emph{unit-oriented stream}, to calculate the Q values of corresponding actions. Finally, we introduce the training algorithm of UNMAS and conduct experiments to compare with several other MARL algorithms in the StarCraft II micro-management scenarios. The results show that UNMAS achieves the highest winning rates on a wide variety of StarCraft II micro-management maps.

\subsection{Related Work}
Research on independent learning begins with the work of \cite{Tan1993}, in which the author tries to execute Q-learning independently for each agent to achieve cooperation. With the use of deep learning, IQL \cite{Tampuu2017} introduces Deep Q-Network (DQN) \cite{Mnih2015} into MARL to cope with high-dimensional observations. Some other research that tackles the multi-agent problem with independent learning can be found in \cite{Gupta2017, Palmer2018, Zhou2019, zhu2020online}. DyAN \cite{wang2020few} divides the observation of the agents into environment-oriented and unit-oriented subsets, and employs a GNN to adapt to the change in the number of agents in the expansion from small-scale scenarios to large-scale scenarios. However, independent learning suffers from the non-stationarity of environment, which leads to the difficulty of learning convergence \cite{Yang2020}.

Centralized learning treats the multi-agent system as a whole. Grid-Net \cite{Han2019} employs an encoder-decoder architecture to generate actions for each grid agent. \cite{Peng2017} and \cite{Qin2018} propose methods for the multi-agent system in the framework of actor-critic. \cite{Sun2020} formulates the problem of multitask into a multi-agent system. However, the centralized learning method is hard to be scaled to larger scenarios, so the CTDE framework becomes popular as a compromise between independent and centralized learning. One spectrum of the CTDE method is the actor-critic method. MADDPG \cite{NIPS2017_7217}, which is developed from DDPG \cite{Lillicrap2016}, provides a centralized critic for the whole system and a group of decentralized actors for each agent. COMA \cite{Foerster2018} computes a counterfactual baseline to marginalize out a single agent's action while the other agents' actions are fixed. COMA also provides a centralized critic to estimate the joint Q-function.

The other spectrum is the value-based method. The challenge for value-based methods in the framework of CTDE is how to factorize the joint action-value correctly \cite{Yang2020}. VDN \cite{Sunehag2018} factorizes the joint action-value by summing up the individual action-values of each agent. QMIX \cite{Rashid2018} combines the individual action-values in a non-linear way by a mixing network whose weights and biases are generated according to the multi-agent global state. ASN \cite{wang2020action} improves the individual action-value network in VDN and QMIX by considering action semantics between agents. QTRAN \cite{Son2019} guarantees more general factorization by factorizing the joint action-value with a transformation. Q-DPP \cite{Yang2020} introduces DPP into MARL tasks to increase the diversity of agents' behaviors. ROMA \cite{Wang2020} and RODE \cite{Wang2021rode} allow agents with a similar role to share similar behaviors. Qatten \cite{Yang2020b} designs a multi-head attention network to approximate joint action-value function. However, such methods do not consider the problem of unshaped scenarios. They assume that the number of agents and the size of action set are fixed, and this assumption limits the applications of these methods.

\subsection{Organization}
This paper is organized as follows. Section II introduces the background knowledge about multi-agent reinforcement learning and factorization of the joint action-value. Section III describes UNMAS from three aspects: joint action-value function, individual action-value function, and training algorithm. Section IV shows the experiments and results, and analyzes the learned strategies. Finally, Section V gives the conclusion.

\section{Background}
\subsection{Multi-Agent Reinforcement Learning}
We consider a fully cooperative multi-agent task with partially observable environment, in which agents observe and take actions individually. This task is also called the \emph{decentralized partially observable Markov decision process} (Dec-POMDP) \cite{Stroock2015}. It can be defined as a tuple $\mathscr{U} = \{\mathbb{D}, \mathbb{S}, \mathbb{U}, \mathbb{T}, \mathbb{O}, R, \gamma\}$. $\mathbb{D}=\{1,...,n\}$ is the set of agents, the number of which is \emph{n}. The Dec-POMDP extends POMDP by introducing the set of joint actions $\mathbb{U}$ and joint observations $\mathbb{O}$. The multi-agent system takes the joint action $\textbf{u}_t = \{u_{1, t},...,u_{n, t}\}$ according to the joint observation $\textbf{o}_{t} = \{o_{1, t},...,o_{n, t}\}$ and gets the immediate reward $r_t$ from environment according to the function $R: \mathbb{S} \times \mathbb{U} \rightarrow \mathbb{R}$. Then, the global state of multi-agent system $s_{t} \subseteq \mathbb{S}$ is produced according to the transition function $\mathbb{T}$, which specifies Pr$(s_{t+1}|s_t, \textbf{u}_t)$. Finally, $\gamma$ in tuple $\mathscr{U}$ denotes the discount factor of \emph{discounted cumulative reward}: $G_{t} = \sum_{j=0}^{\infty}\gamma^{j}r_{t+j}$. 

In the Dec-POMDP, we consider a joint policy $\boldsymbol{\pi}$, which is composed of the policies $\pi_{i}(u_{i, t}| o_{i, t})$ of every agent $i \in \mathbb{D}$. The joint policy has a joint action-value function: $\textbf{Q}(\textbf{o}_{t}, \textbf{u}_{t}) = \mathbb{E}_{\textbf{o}_{t+1:\infty}, \textbf{u}_{t+1:\infty}}[G_{t}|\textbf{o}_{t}, \textbf{u}_{t}]$. The purpose of fully cooperative multi-agent task is to maximize this return. 

\subsection{Factorization of Joint Action-Value Function}
As mentioned before, value-based methods with CTDE framework need to find an efficient and adaptable way to factorize the joint action-value. A common requirement in the field of CTDE is the \emph{Individual Global Max} (IGM) condition. 

\begin{definition}
For a multi-agent system with $n$ agents, if the optimal joint action is equivalent to the set of agents' actions that make the individual action-value functions get the maximum values, the system is said to satisfy the IGM condition. This statement is formulated as follows:
\begin{equation}
\normalfont
\arg\max_{\textbf{u}_{t}}\textbf{Q}(\textbf{o}_{t}, \textbf{u}_{t}) =
\begin{pmatrix}
\arg\max_{{u}_{1, t}}Q_{1}({o}_{1, t}, {u}_{1, t}) \\
\vdots \\
\arg\max_{{u}_{n, t}}Q_{n}({o}_{n, t}, {u}_{n, t}) \\
\end{pmatrix}.
\end{equation}
\end{definition}

Since the IGM condition is difficult to be verified in practice, the following monotonicity condition is mostly used as its substitute: 
\begin{equation}
\frac{\normalfont \partial \textbf{Q}(\textbf{o}_{t}, \textbf{u}_{t})}{\partial Q_i(o_{i, t}, u_{i, t})} \ge 0, \  \forall i \in \mathbb{D}.
\label{eq:mono}
\end{equation}

If $\normalfont \textbf{Q}(\textbf{o}_{t}, \textbf{u}_{t})$ is factorized monotonically as (\ref{eq:mono}), then this way of factorization meets the IGM condition. Under the IGM condition, maximizing the joint action-value is equivalent to maximizing the action-value of each agent. Thus, the multi-agent system pursues the same goal as every agent and achieves cooperation. 

We propose UNMAS in the next section to adapt to the number and size changes in the unshaped scenario.

\section{Unshaped Networks for Multi-Agent Systems}
In this section, we propose a new method called Unshaped Networks for Multi-Agent Systems (UNMAS), aiming at helping agents adapt to the change in the unshaped scenario. UNMAS uses the self-weighting mixing network, which is adaptive to the size of input, to factorize the joint action-value. The individual action-value network of agent is specially designed with two network streams to evaluate the actions in the \emph{environment-oriented} subset and \emph{unit-oriented} subset separately. The size of unit-oriented subset is also unshaped. 

\subsection{Self-weighting mixing network}
Fig. \ref{fig:self-weighting} presents the self-weighting mixing network, which approximates the joint action-value function. Since UNMAS adopts the CTDE framework, the joint action-value function is only used in training. The weights and biases of the self-weighting mixing network are produced by a group of hyper networks represented by the yellow blocks in the architecture, and the input of the hyper networks is the global state $s_t$. 

\begin{figure}[htbp]
\centering
\includegraphics[scale=0.275]{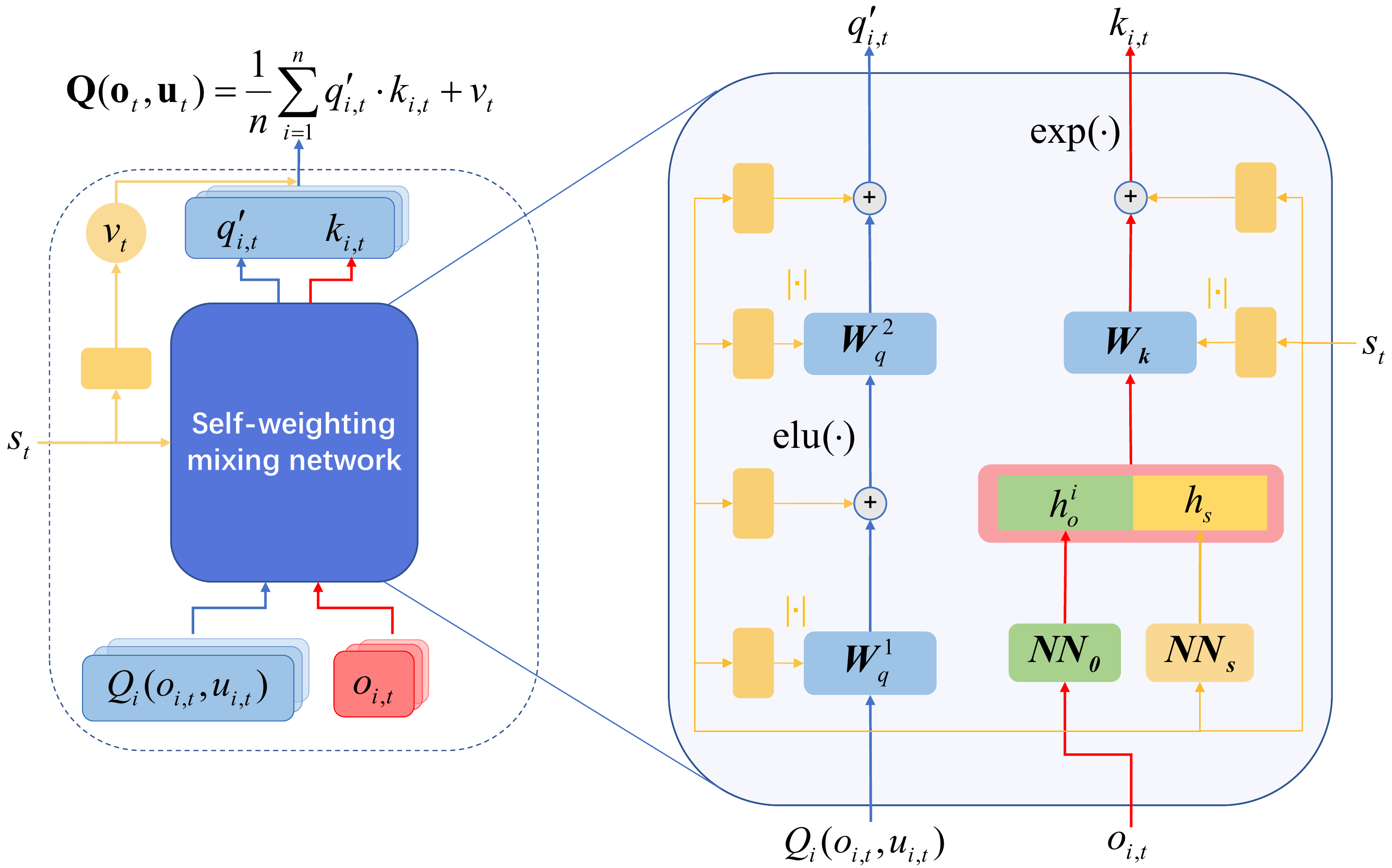}
\caption{Self-weighting mixing network architecture. Specifically, the right side of the diagram shows the details of the self-weighting mixing network. Given the observation and action-value of each agent $i$, it provides $q_{i, t}^{'}$ and $k_{i, t}$ to calculate the joint action-value $\textbf{Q}(\textbf{o}_{t}, \textbf{u}_{t})$ with its bias $v_t$ on the left side. }
\label{fig:self-weighting}
\end{figure}

Instead of following the original definition of the joint action-value function that takes joint observations and joint actions as input, CTDE decomposes it as a mapping from the individual action-values of each agent. Such way greatly decouples the complicated interplay between different agents.

The biggest difference between UNMAS and other CTDE methods lies in the ways of dealing with the input size of joint action-value function. In the other CTDE methods, the input size of joint action-value function is determined beforehand and keeps constant during the whole training and execution phase. Since they still provide action-values for dead agents, the joint action-value may be miscalculated and reduce the performance of agent policies. However, the joint action-value function represented by self-weighting mixing network can change its input size with the number of agents. It outputs two scalars $q_{i, t}^{'}$ and $k_{i, t}$ for each agent:
\begin{equation}
\begin{aligned}
q_{i, t}^{'} &= W_q^2 \cdot {\rm elu} (W_q^1 \cdot Q_i(o_{i, t}, u_{i, t}; \theta_i) + b_q^1) + b_q^2 \\
k_{i, t} &= W_k \cdot [h_{o}^{i}, h_{s}] + b_k,
\end{aligned}
\label{eq:q_and_k}
\end{equation}
where $q_{i, t}^\prime$ is the result after nonlinear mapping by self-weighting mixing network, and $k_{i, t}$ is the individual weight, which describes the contribution of agent $i$ to the joint action-value. By dynamically evaluating the contribution of each agent according to $k_{i, t}$, the estimation of the joint action-value could be more accurate. Thus, the cooperation of the multi-agent system becomes better. Elu \cite{Clevert2015} is a nonlinear activation function, which is also used in the mixing network of QMIX. $Q_i(o_{i, t}, u_{i, t}; \theta_i)$ is the action-value of agent $i$, which is estimated by the \emph{individual action-value network}. $W_q^1$, $W_q^2$, and $W_k$ are weights of the self-weighting mixing network as shown in Fig. \ref{fig:self-weighting}, and $b_q^1$, $b_q^2$, and $b_k$ are the corresponding biases. $h_o^i$ and $h_s$ are outputs of networks $\textbf{NN}_o$ and $\textbf{NN}_s$, which take observation $o_{i, t}$ of agent $i$ and the global state $s_t$ as input, respectively. Finally, the concatenation of $h_o^i$ and $h_s$ is taken as the input of $[W_k, b_k]$ to calculate the weight $k_{i ,t}$.

By employing self-weighting mixing network to evaluate the contribution of each agent, UNMAS can adapt to the change in the number of agents in training and improve the accuracy of the estimation of joint action-value. Taking the StarCraft II micro-management scenario as an example, the number of agents decreases due to the death caused by enemies attack. The weights and biases of the self-weighting mixing network are obtained through the global state. Therefore, even if the $q_{i, t}^{'}$ of each agent do not take the information of other agents into account, the self-weighting mixing network can still estimate the joint action-value accurately. The joint action-value function is formulated as follows:
\begin{equation}
\begin{aligned}
\textbf{Q}(\textbf{o}_{t}, \textbf{u}_{t}; \boldsymbol{\theta}_{joint}) = \frac{1}{n} \sum_{i=1}^{n} q_{i, t}^{'} \cdot k_{i, t} + v_{t},
\end{aligned}
\label{eq:q_jt}
\end{equation}
where $n$ is the number of agents in the current timestep $t$, and $v_{t}$ is a scalar output from the hyper network. It is a bias of the joint action-value, which is generated from a fully-connected network with $s_t$ as input. The ablation result of QMIX \cite{Rashid2018} shows that adding $v_{t}$ to the joint action-value is able to achieve better results, which is also confirmed in our ablation experiments. In (\ref{eq:q_jt}), in order to adapt to the change in the number of agents, we use $n$ to average the weighted sum $\sum_{i=1}^{n} q_{i, t}^{'} \cdot k_{i, t}$. Besides, the weights of self-weighting mixing network take the \emph{absolute} values to make sure that the factorization meets the IGM condition. The proof that the self-weighting mixing network meets the IGM condition is provided in Theorem \ref{theorem}.

\begin{theorem}
In a fully cooperative task, if letting the self-weighting mixing network shown in Fig. \ref{fig:self-weighting} represent the joint action-value function, then the factorization process meets the IGM condition.
\label{theorem}
\end{theorem}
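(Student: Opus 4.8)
The plan is to reduce the theorem to the \emph{monotonicity} condition (\ref{eq:mono}), since the excerpt already records that a monotone factorization meets the IGM condition. Concretely, I would fix the joint observation $\textbf{o}_t$ and the global state $s_t$, regard the individual action-values $Q_i(o_{i,t},u_{i,t};\theta_i)$ as the free variables, and compute $\partial\textbf{Q}(\textbf{o}_t,\textbf{u}_t;\boldsymbol{\theta}_{joint})/\partial Q_i$ directly from (\ref{eq:q_jt}). The first step is the bookkeeping observation that in (\ref{eq:q_jt}) the bias $v_t$ depends only on $s_t$, the per-agent weight $k_{i,t}$ depends only on $h_o^i$ and $h_s$ (hence only on $o_{i,t}$ and $s_t$) by (\ref{eq:q_and_k}), and the summand $q'_{j,t}$ for $j\neq i$ depends only on $Q_j$; none of these involves $Q_i$. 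Hence the chain rule collapses to
\[
\frac{\partial\textbf{Q}}{\partial Q_i} \;=\; \frac{1}{n}\, k_{i,t}\, \frac{\partial q'_{i,t}}{\partial Q_i},
\]
and it remains to show that the two factors on the right are non-negative.

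For $\partial q'_{i,t}/\partial Q_i$ I would differentiate the two-layer expression for $q'_{i,t}$ in (\ref{eq:q_and_k}), getting a sum of terms of the form (entry of $W_q^2$)$\times\,\mathrm{elu}'(\cdot)\,\times$(entry of $W_q^1$). Since $\mathrm{elu}'$ is strictly positive everywhere ($\mathrm{elu}'(x)=1$ for $x\ge 0$ and $\mathrm{elu}'(x)=\alpha e^{x}>0$ for $x<0$), and the construction takes the \emph{absolute} values of the mixing-network weights $W_q^1$ and $W_q^2$, every summand is non-negative, so $\partial q'_{i,t}/\partial Q_i\ge 0$. For the factor $k_{i,t}$ I would appeal to the same design choice — the per-agent weight is passed through an absolute value, so $k_{i,t}\ge 0$ — and then $\tfrac{1}{n}>0$ gives $\partial\textbf{Q}/\partial Q_i\ge 0$ for every $i\in\mathbb{D}$, which is precisely (\ref{eq:mono}). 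Invoking the implication stated immediately after (\ref{eq:mono}), the factorization meets the IGM condition, i.e. $\arg\max_{\textbf{u}_t}\textbf{Q} = (\arg\max_{u_{i,t}}Q_i)_{i=1}^{n}$; the normalization by the current agent count $n$ is harmless, being a fixed positive scalar for the timestep that does not affect any $\arg\max$.

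I expect the main obstacle to be the handling of the per-agent weight $k_{i,t}$. Unlike QMIX, where every mixing weight is produced directly by a hyper-network and clipped to be non-negative, here $k_{i,t}=W_k[h_o^i,h_s]+b_k$ is an intermediate quantity, so the argument has to state explicitly which quantity carries the absolute value (equivalently, that $W_k$ is non-negative together with non-negative hyper-network outputs $h_o^i,h_s$) and why that forces $k_{i,t}\ge 0$ over the reachable range of its inputs. A secondary point to flag is that a non-strict inequality only yields set-wise equality of the $\arg\max$ sets (ties may be resolved differently); a strictly stronger statement would require $k_{i,t}>0$ and $W_q^1,W_q^2$ strictly positive, which merits a one-line remark. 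The remaining ingredients — the chain rule, positivity of $\mathrm{elu}'$, and the final appeal to the monotonicity-implies-IGM fact — are routine.
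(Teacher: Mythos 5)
Your proposal takes essentially the same route as the paper: differentiate the factorization (\ref{eq:q_jt}) with respect to each $Q_i$, note that the summands for $j\neq i$, the weight $k_{i,t}$, and the bias $v_t$ do not involve $Q_i$, use the positivity of $\mathrm{elu}'$ together with the absolute-valued mixing weights $W_q^1, W_q^2$ to get a non-negative derivative, and invoke the monotonicity-implies-IGM implication after (\ref{eq:mono}). The one point where you diverge is exactly the obstacle you flagged: the non-negativity of the per-agent weight. The paper does not obtain it from an absolute value on $W_k$ (which, as you suspect, would not suffice on its own, since $h_o^i$, $h_s$ and $b_k$ can be negative); instead, in the proof the weight enters as $\exp\bigl(W_k\cdot[h_o^i,h_s]+b_k\bigr)$, i.e.\ it is passed through an exponential, which makes it strictly positive for any sign of the hidden features, and incidentally also addresses your secondary remark about strictness. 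With that substitution your argument matches the paper's (up to the paper's stray factor $2/n$ in its derivative, which is immaterial to the sign); as written, your justification of $k_{i,t}\ge 0$ is the only step that would need repair.
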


\begin{proof}
The joint action-value estimated by the self-weighting mixing network is written as follows:
\begin{equation}
\begin{aligned}
\textbf{Q}(\textbf{o}_{t}, \textbf{u}_{t}; \boldsymbol{\theta}_{joint}) &= \frac{1}{n} \sum_{i=1}^{n} q_{i, t}^{'} \cdot k_{i, t} + v_{t}\\
&= \frac{1}{n} \sum_{i=1}^{n} [W_q^2 \cdot {\rm elu} (W_q^1 \cdot Q_i + b_q^1) + b_q^2] \cdot \\
& \ \ \ \ \ \ \ \ \ \ \ \ \exp(W_k \cdot [h_{o}^{i}, h_{s}] + b_k) + v_{t}.
\end{aligned}
\label{eq:proof}
\end{equation}
The partial derivative of the joint action-value to agent action-value is shown in (\ref{eq:deri}), where all elements are non-negative.
\begin{equation}
\begin{aligned}
\frac{\partial \textbf{Q}}{\partial Q_i} &= \frac{2}{n} \cdot W_q^2 \cdot \frac{\partial {\rm elu}(W_q^1 \cdot Q_i + b_q^1)}{\partial Q_i} \cdot \\
& \ \ \ \ \ \ \ \ \ \ \ \ \exp(W_k \cdot [h_{o}^{i}, h_{s}] + b_k), \forall i \in \mathbb{D}.
\end{aligned}
\label{eq:deri}
\end{equation}
Since the parameter $\alpha$ of elu function is set to 1, the derivative $\partial {\rm elu}(W_q^1 \cdot Q_i + b_q^1) / \partial Q_i$ shown in (\ref{eq:deri}) can be written as follows:
\begin{equation}
\frac{\partial {\rm elu}(W_q^1 \cdot Q_i + b_q^1)}{\partial Q_i} =
\left\{
\begin{aligned}
W_q^1 \cdot e^{W_q^1 \cdot Q_i + b_q^1} &, W_q^1 \cdot Q_i + b_q^1 \ge 0 & \\
W_q^1\ \ \ \ \ \ \ \ \ \ \ \ \ \ &, otherwise &
\end{aligned}
\right.
\label{eq:deri-elu}
\end{equation}
As described above, the weights of self-weighting mixing network take absolute values, so the partial derivative $\partial \textbf{Q} / \partial Q_i$ is also non-negative. According to the monotonicity condition, the factorization meets the IGM condition.
\label{proof}
\end{proof}

Through the special design of network structure and averaging weighted sum with $n$, self-weighting mixing network can adapt to the change in the number of agents. In the next subsection, we introduce the architecture of the individual action-value network, which can adapt to the change in the size of action set.

\subsection{Individual action-value network}
The individual action-value network approximates the action-value function of each agent, and it shares the parameters among agents. In the framework of CTDE, an agent takes an action according to its local observation without communicating with the other agents. 

In order to help agents adapt to the change in the size of action set, we divide the action set into two subsets. One is the environment-oriented subset, and the other is the unit-oriented subset. The first action subset represents the interactions between the agent and the environment, and keeps constant during the whole running phase. The second represents the interactions between the agent and other agents, and its size varies with the number of agents in the current environment. Taking StarCraft II micro-management scenario as an example, the environment-oriented actions include \emph{stop} and four \emph{movement} actions in the directions of \emph{up}, \emph{down}, \emph{left}, and \emph{right}, while the unit-oriented action is the attack action aiming at an  enemy unit.

\begin{figure}[htbp]
\centering
\includegraphics[scale=0.275]{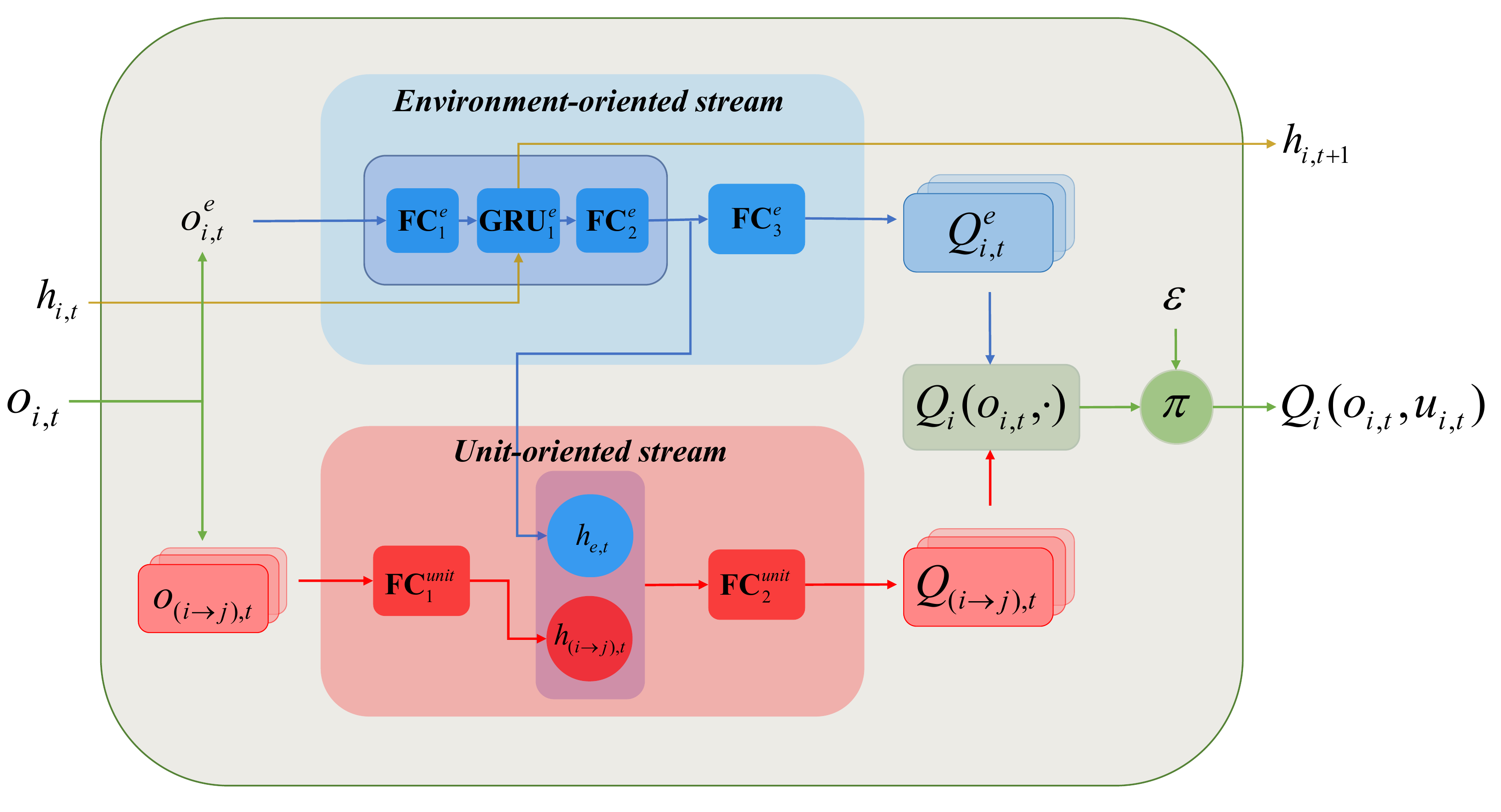}
\caption{Individual action-value network architecture. Specifically, it divides the action set of an agent into two subsets: environment-oriented and unit-oriented, and evaluates them through two separate streams.}
\label{fig:agent}
\end{figure}

Based on the division of the action set, we propose the individual action-value network as shown in Fig. \ref{fig:agent}. Two network streams are constructed for the two action subsets. The \emph{environment-oriented stream} takes the \emph{environment-oriented observation} as input. It contains three \emph{Fully-Connected} (FC) layers $\textbf{FC}_{j}^{e}, j=1,2,3$, and one \emph{Gated Recurrent Unit} (GRU) \cite{Cho2014} layer $\textbf{GRU}_{1}^{e}$.

The GRU layer, employed to solve the problem of POMDP, can better estimate the current state. At each timestep $t$, the GRU layer additionally inputs a hidden state $h_{i, t}$, which contains historical information and outputs the next hidden state. Its outputs are the Q values of the environment-oriented actions as follows:
\begin{equation}
Q_{i, t}^{e}(o_{i, t}^{e}, h_{i, t}, \cdot), h_{i, t + 1} = \textbf{NN}_{i}^{e}(o_{i,t}^{e}, h_{i, t})
\end{equation}
where $Q_{i, t}^{e}$ are the Q values of agent $i$ executing the environment-oriented actions. $o_{i, t}^{e}$ is the environment-oriented observation, which describes the information observed by agent $i$ from the environment. $h_{i, t+1}$ is the hidden state of the next timestep $t+1$. $\textbf{NN}_{i}^{e}$ is the short form of the environment-oriented stream.

The \emph{unit-oriented stream} takes the \emph{unit-oriented observations} as input. Its outputs are the Q values of the unit-oriented actions. The unit-oriented observation $o_{(i\to j), t}$ differs from the environment-oriented observation in the fact that it describes the information from agent $i$ to the target unit $j$. Taking the StarCraft II micro-management scenario as an example, if there are 3 enemies on the map at timestep \emph{t}, the unit-oriented stream takes 3 observations in terms of these enemies to evaluate the Q values of attacking them. Due to the number of enemies changes over time in unshaped scenarios, the number of attack targets also changes. Besides, it should be noted that if the agent has a healing action, the target units here will be other agents, not enemies. Furthermore, we concatenate the first layer's output of two streams to form a new vector, and feed it into the second layer of the unit-oriented stream. The Q values of the unit-oriented actions are calculated as follows:
\begin{equation}
\begin{aligned}
vector &= [h_{e, t}, h_{(i\to j), t}] \\
Q_{(i\to j), t}(o_{(i\to j), t}, u_{(i\to j), t}) &= \textbf{FC}_{2}^{unit}(vector)
\end{aligned}
\label{eq:concat}
\end{equation}
where $vector$ is the concatenation of $h_{e, t}$ and $h_{(i\to j), t}$, which are the output of the first part of environment-oriented stream and unit-oriented stream. $Q_{(i\to j), t}(o_{(i\to j), t}, u_{(i\to j), t})$ is the Q value of agent $i$ executing the unit-oriented action on target unit $j$. $\textbf{FC}_{2}^{unit}$ is the second layer of the unit-oriented stream.

Due to the concat operation, the output of the unit-oriented stream also contains the historical information provided by $h_{i, t}$, so there is no need to provide a GRU layer for it. {For the sake of brevity, the environment-oriented observations and unit-oriented observations of agent $i$ are simplified into one variable $o_{i, t}=\{o_{i, t}^{e}, o_{(i\to j), t}\}$. Finally, we concatenate all $Q_{i, t}^{e}$ and $Q_{(i\to j), t}$ to form the individual action-value function $Q_{i}(o_{i, t}, \cdot)$.

It should be noted that the individual action-value networks of other methods like QMIX fail in adapting to the change in the size of action set. They use a single network to evaluate the Q values of all kinds of actions, which include both valid and invalid actions.

\begin{algorithm}[H]
  \caption{Unshaped Networks for Multi-Agent Systems}
  \label{alg:unmas}
  \begin{algorithmic}[1]
  	\State Initialize the replay buffer $\emph{D}$ and exploration rate $\epsilon$;
  	\State Initialize $\boldsymbol{\theta}_{joint}$ and $\theta_{i}$ with random parameters;
  	\State Initialize the parameters of target networks $\theta^{-}_{i}$ and $\boldsymbol{\theta}_{joint}^{-}$: $\theta^{-}_{i} = \theta_{i}, \boldsymbol{\theta}_{joint}^{-} = \boldsymbol{\theta}_{joint}$;
	
	\For{$episode = 1 \ to \ M $}      	
    	\For{$t = 0, ..., T$}
    		\State Collect the global state $s_{i, t}$;
    		\For{each agent $i$}
    			\State Collect the local observation $o_{i, t}$;
				\State Choose a random action with probability $\epsilon$;
				\State Otherwise, estimate action-values $Q_{i}(o_{i, t}, \cdot; \theta_i)$ for each agent with the individual action-value network and choose greedy action following (\ref{eq:argmax});				
			\EndFor
			
			\State Execute the joint action $\textbf{u}_{t}$ and collect the next joint observation $\boldsymbol{o}_{t + 1}$, next state $s_{t + 1}$, and reward $r_{t + 1}$;
			\State Store the transition $(\boldsymbol{o}_{t}, \boldsymbol{u}_{t}, r_{t+1}, \boldsymbol{o}_{t+1})$ into $\emph{D}$;
			\State Store the state transition $(s_{t}, s_{t + 1})$ into $\emph{D}$;
    	\EndFor
    	\If{replay buffer $\emph{D}$ is full}
		\State Sample $b$ minibatch from $\emph{D}$;
		\State Estimate joint action-value $\textbf{Q}(\boldsymbol{o}_{k}, \boldsymbol{u}_{k}; \boldsymbol{\theta}_{joint})$ as (\ref{eq:q_jt}) by the self-weighting mixing network; 
		\State Calculate the \emph{update target} $y_k^{joint}$ in (\ref{eq:target}) and update the networks by the loss $\mathcal{L}(\theta)$ in (\ref{eq:loss});		    		
    	\EndIf
    	\State Replace target networks with \emph{target update interval};
    	
    	\State Decay the exploration rate $\epsilon$;
    \EndFor

  \end{algorithmic}
\end{algorithm}

\subsection{Training algorithm of UNMAS}
\label{sec:train}
In this subsection, we introduce the training algorithm of UNMAS. The detail is shown in Algorithm \ref{alg:unmas}. The agents take $\epsilon$-greedy policy to explore the environment in training, that is, choose a random action with probability $\epsilon$, otherwise choose the action with the highest Q value:
\begin{equation}
u_{i, t}=\arg \max_{u}Q_{i}(o_{i, t}, u; \theta_i),
\label{eq:argmax}
\end{equation}
where the exploration rate $\epsilon$ decays along the training steps. The transitions and rewards generated by the interaction between the multi-agent system and the environment are stored into the replay buffer $\emph{D}$. 

When the replay buffer $\emph{D}$ is full, we sample a minibatch from replay buffer $\emph{D}$ to update the networks. They are trained by minimizing the loss:
\begin{equation}
\mathcal{L}(\theta) = \frac{1}{b}\sum_{k=1}^{b}[(y_{k}^{joint} - \textbf{Q}(\boldsymbol{o}_{k}, \boldsymbol{u}_{k}; \boldsymbol{\theta}_{joint})]^{2},
\label{eq:loss}
\end{equation}
where $b$ is the size of minibatch. $\theta = \{\boldsymbol{\theta}_{joint}, \theta_i\}$ is the parameter of all networks involved in training. $\boldsymbol{\theta}_{joint}$ and $\theta_i$ are the parameters of self-weighting mixing network and individual action-value network respectively. $y_{k}^{joint}$ is the \emph{joint update target} of the multi-agent system, which is defined as follows:
\begin{equation}
\begin{aligned}
y^{joint}_{k} & = r_{k+1} + \gamma \textbf{Q}(\boldsymbol{o}_{k+1}, \bar{\boldsymbol{u}}_{k+1}; \boldsymbol{\theta}^{-}_{joint}) \\
\bar{\boldsymbol{u}}_{k+1} & = [\arg\max_{u_{i, k+1}}Q_{i}(o_{i, k+1}, u_{i, k+1}; \theta_{i}^{-})],
\end{aligned}
\label{eq:target}
\end{equation}
where the target action $\bar{\boldsymbol{u}}_{k+1}$ is obtained by maximizing the individual action-value function of agents. $\boldsymbol{\theta}^{-}_{joint}$ and $\theta_{i}^{-}$ are parameters of the target network for the self-weighting mixing network and the individual action-value network. Throughout the training phase, they are replaced by the current parameters every fixed episodes, which is defined as \emph{target update interval}.

The multi-agent system is treated as a whole in the training phase, so the self-weighting mixing network and individual action-value network can also be seen as one neural network, which possesses the parameter $\theta$. At each iteration, $\theta$ is updated by the gradient that is obtained by minimizing the loss function in (11). Since the self-weighting mixing network only appears in the training phase, this learning law indicates that the role of the self-weighting mixing network and its parameter is to help the update of the individual action-value network.

\begin{figure*}[htbp]
\centering
\subfloat[3m]{
\begin{minipage}{5cm}
\centering
\includegraphics[scale=0.3]{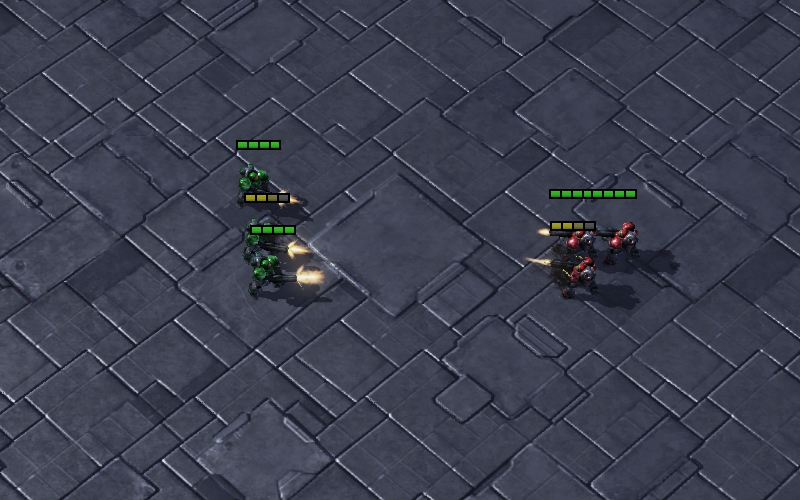}\label{fig-example}
\end{minipage}
}
\subfloat[2s3z]{
\begin{minipage}{5cm}
\centering
\includegraphics[scale=0.3]{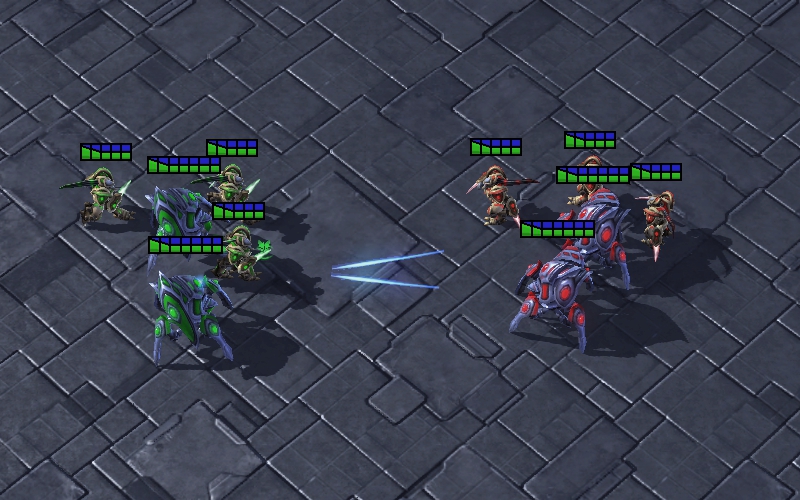}
\end{minipage}
}
\subfloat[5m\_vs\_6m]{
\begin{minipage}{5cm}
\centering
\includegraphics[scale=0.3]{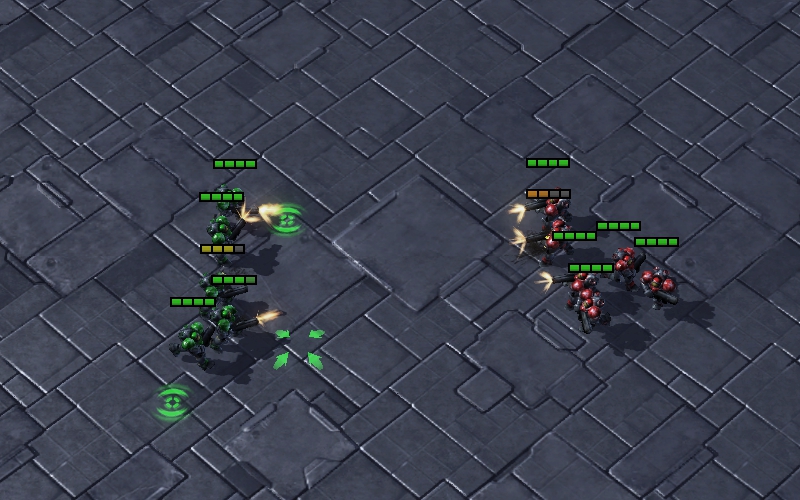}
\end{minipage}
}
\\
\subfloat[8m]{
\begin{minipage}{5cm}
\centering
\includegraphics[scale=0.3]{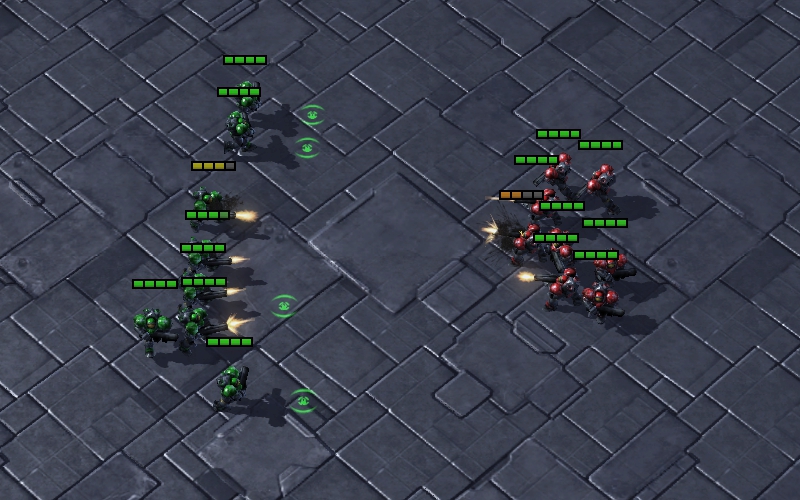}
\end{minipage}
}
\subfloat[3s5z]{
\begin{minipage}{5cm}
\centering
\includegraphics[scale=0.3]{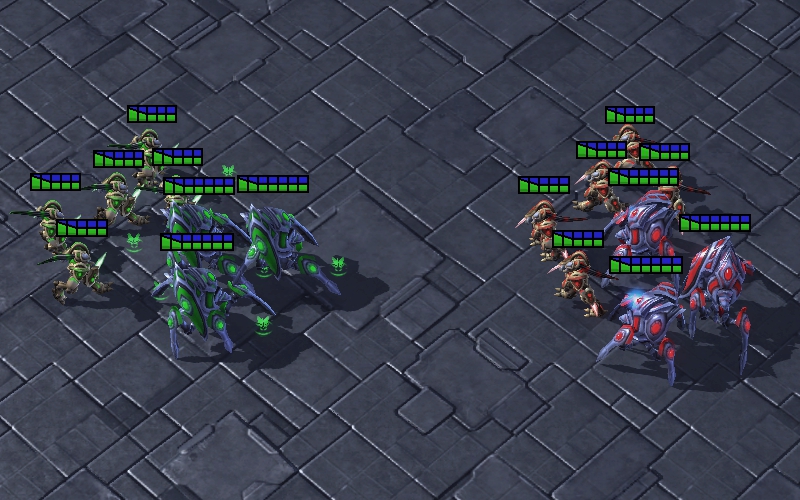}
\end{minipage}
}
\subfloat[3s5z\_vs\_3s6z]{
\begin{minipage}{5cm}
\centering
\includegraphics[scale=0.3]{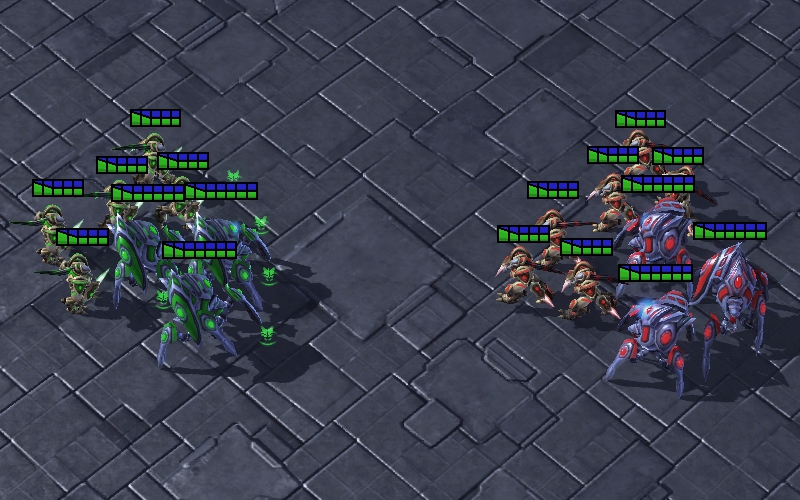}
\end{minipage}
}
\caption{The experimental scenarios to test the methods.}
\label{fig:show}
\end{figure*}

\section{Experiments on StarCraft II Micro-Management}

\subsection{Experimental Setup}
The micro-management scenario is an important aspect of StarCraft II. It requires the player to control a group of agents to win a combat against enemies. The micro-management scenario with a decentralized setting is a proper experimental scenario to test MARL methods because the agents are partially observable and can perform decentralized actions. We use StarCraft Multi-Agent Challenge (SMAC)\cite{Samvelyan2019} as the environment and evaluate our method on various maps.

We consider combats with two groups of units, one of which is using UNMAS and the other is the enemy. This scenario is an unshaped scenario, as the number of agents and enemies will decrease due to the attack. In this scenario, each agent is partially observable, which means that it could only get the information within its sight range. The observation mainly contains the following attributes for both allies and enemies: \emph{distance, relative\_x, relative\_y, health, shield}, and \emph{unit\_type}. The global state mainly contains the above information for all units, both agents and enemies. Details of the observation and state are provided in the Appendix A. Each agent has the following actions: \emph{move}[\emph{direction}], \emph{stop}, and \emph{attack}[\emph{enemy\_id}]. Each enemy is controlled by the built-in StarCraft II AI, which uses hand-crafted scripts. We set the difficulty of built-in AI as ``\emph{very difficult}''.

We take the scenario in Fig. \ref{fig:show}(a) as an example to show more details about the uncertainty in StarCraft II micro-management. In this scenario, there are three agents and three enemies at the beginning. Each agent receives two kinds of observations. One kind is environment-oriented represented by a tensor of length 42. It contains the information of agent self (8), agent allies and enemies ($2*5+3*5$), and the one-hot coding of agent last action (9). The other kind is unit-oriented represented by a collection of 3 tensors of length 5. Each tensor describes the information of an enemy (5) observed by this agent. Then, agents select actions from their action sets according to these observations. Their action sets contain the following actions: \emph{move}[\emph{up}], \emph{move}[\emph{down}], \emph{move}[\emph{left}], \emph{move}[\emph{right}], \emph{stop}, \emph{attack}[\emph{enemy\_0}], \emph{attack}[\emph{enemy\_1}], and \emph{attack}[\emph{enemy\_2}]. 

Each agent evaluates actions by the individual action-value network described before. This network takes environment-oriented observations (42) and unit-oriented observations ($3*5$) as input, and outputs the values of actions ($6+3$). The values of 6 environment-oriented actions and 3 unit-oriented actions are computed by the separate streams. If an enemy dies in combats, the size of unit-oriented observations will become $2*5$, and thus the unit-oriented stream just outputs the values of attack actions towards the remaining two enemy units. In this way, the output dimension of the individual action-value network becomes $6+2$.

During training, UNMAS samples from the replay buffer and uses the self-weighting mixing network shown in Fig. \ref{fig:self-weighting} to compute the joint action-value of the multi-agent system. The input dimension of this network is 3, which corresponds to the number of agents on the map. Furthermore, if an agent dies later in combat, the individual action-value network provides only the action-values of the remaining two agents. Therefore, the input dimension of self-weighting mixing network becomes 2. 

At each timestep, agents perform their actions in the decentralized way and receive a global reward from the environment. SMAC provides positive rewards by default, that is, the damage done by agents to enemy units. In addition, the positive (+10) reward is also provided after an enemy is killed. We evaluate the method by running 32 episodes every 10000 timesteps to get the \emph{test winning rates}. The agents use $\epsilon$-greedy to choose their action and turn it off in testing. The exploration rate $\epsilon$ decays from 1 to 0.05. 

\begin{figure*}[ht]
\centering
\subfloat{
\begin{minipage}{12cm}
\centering
\includegraphics[scale=0.35]{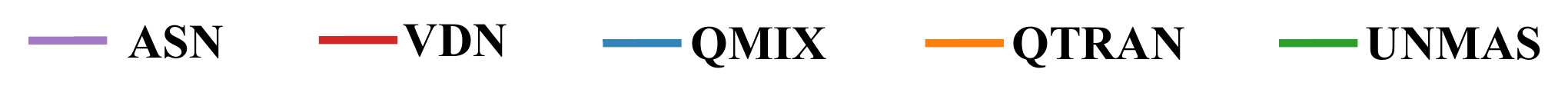}
\end{minipage}
}
\setcounter{subfigure}{0}
\subfloat[3m]{
\begin{minipage}{5cm}
\centering
\includegraphics[scale=0.325]{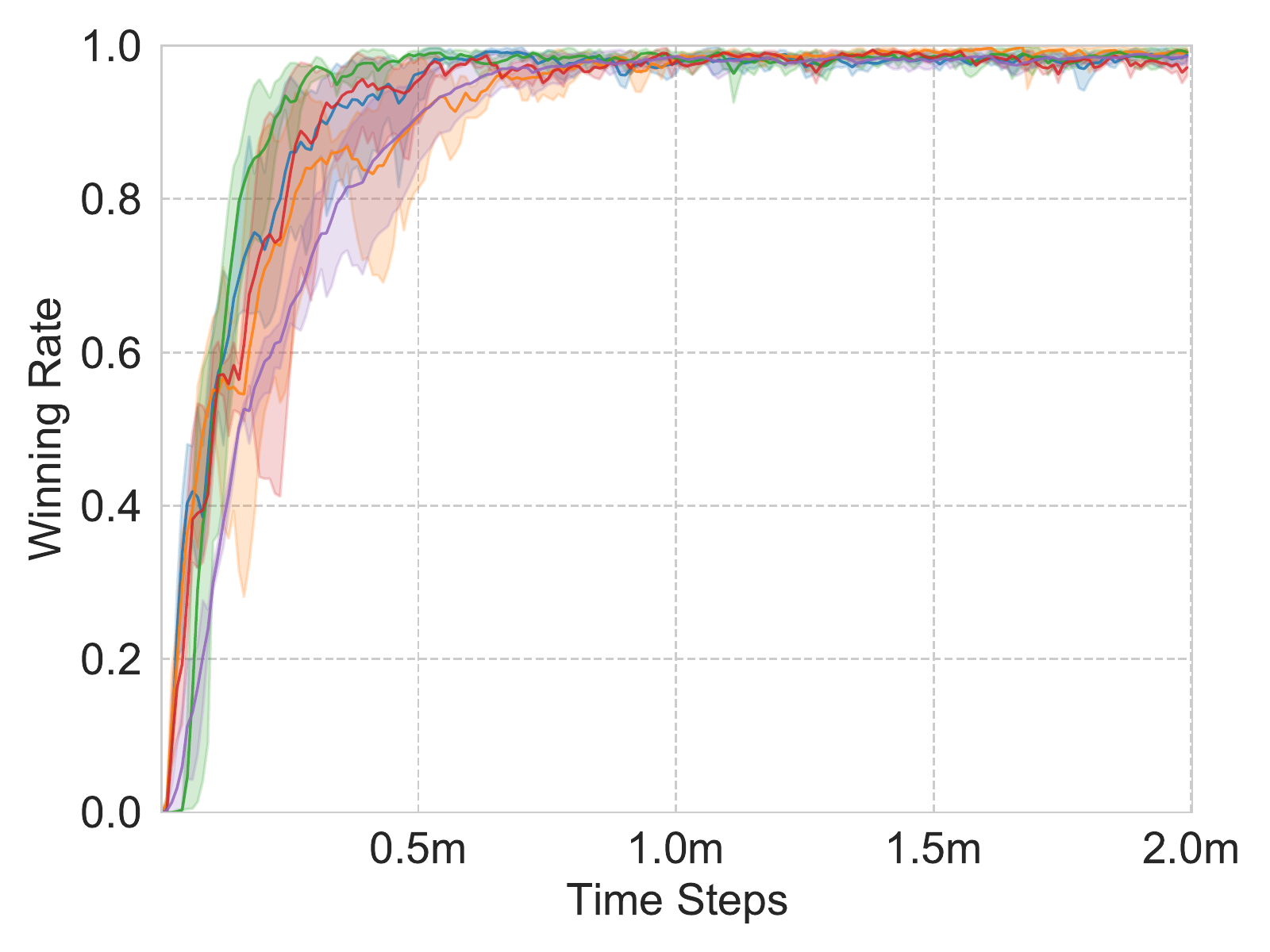}\label{fig-3m}
\end{minipage}
}
\subfloat[2s3z]{
\begin{minipage}{5cm}
\centering
\includegraphics[scale=0.325]{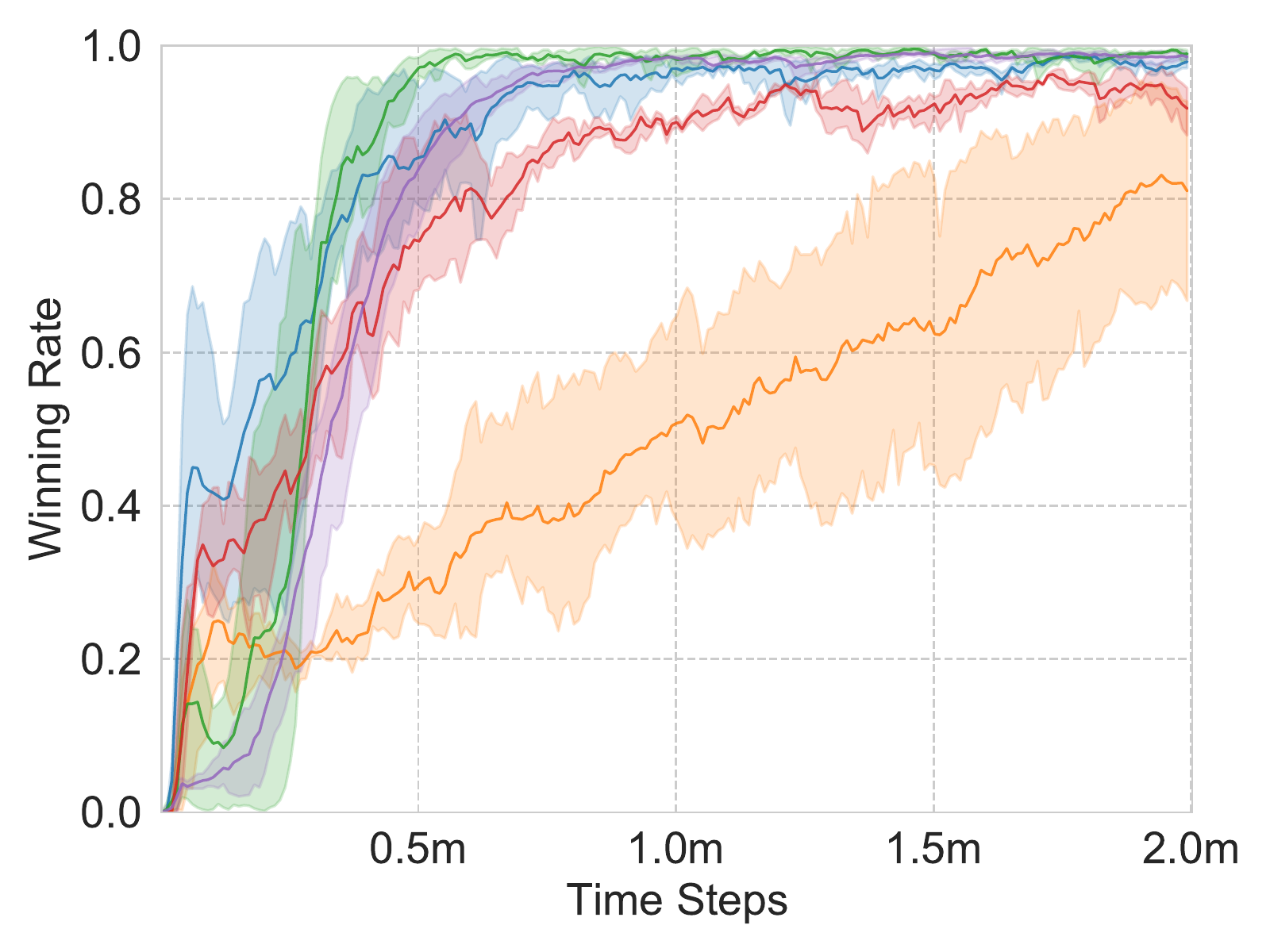}\label{fig-2s3z}
\end{minipage}
}
\subfloat[5m\_vs\_6m]{
\begin{minipage}{5cm}
\centering
\includegraphics[scale=0.325]{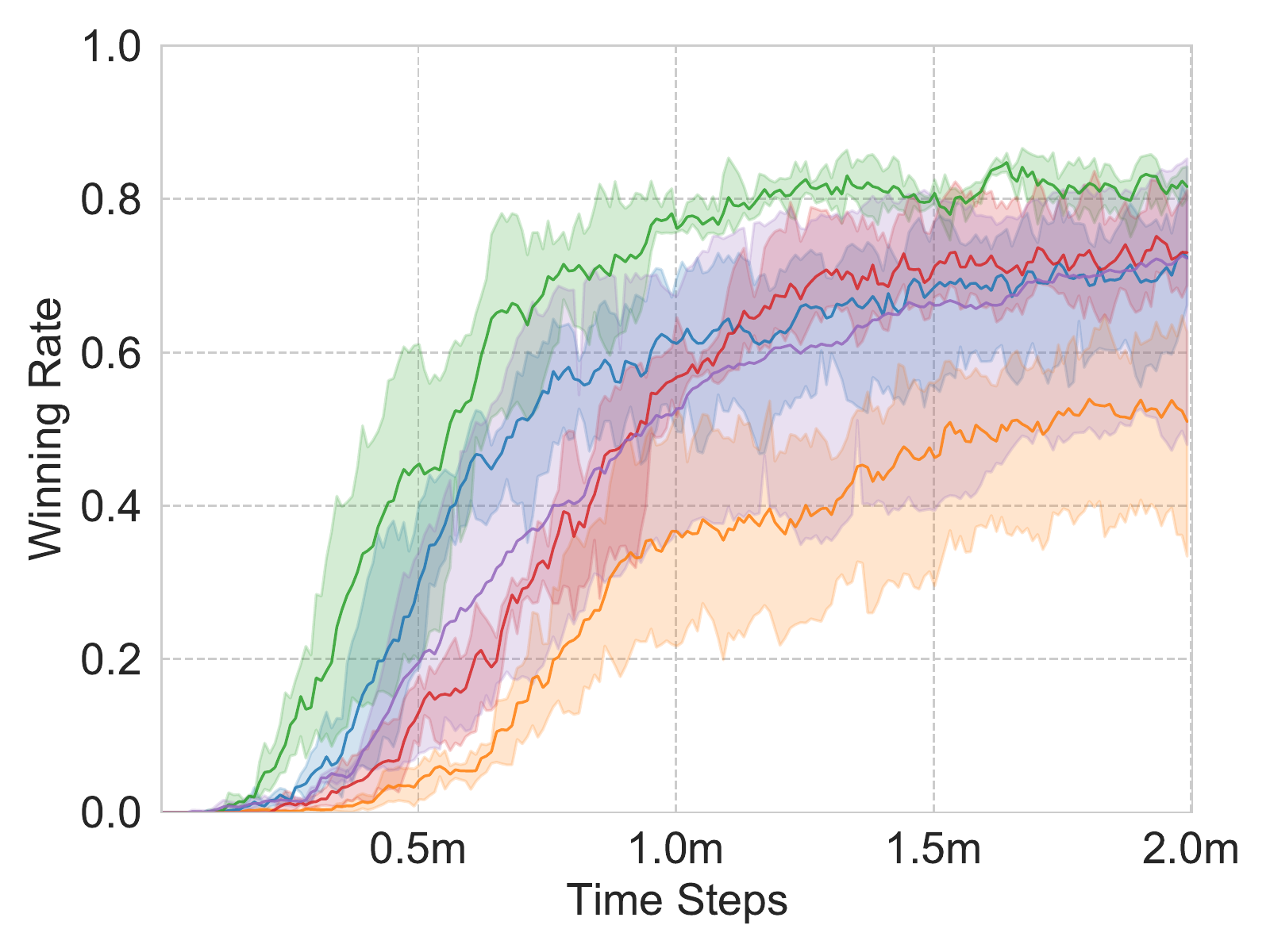}\label{fig-5v6}
\end{minipage}
}
\\
\subfloat[8m]{
\begin{minipage}{5cm}
\centering
\includegraphics[scale=0.325]{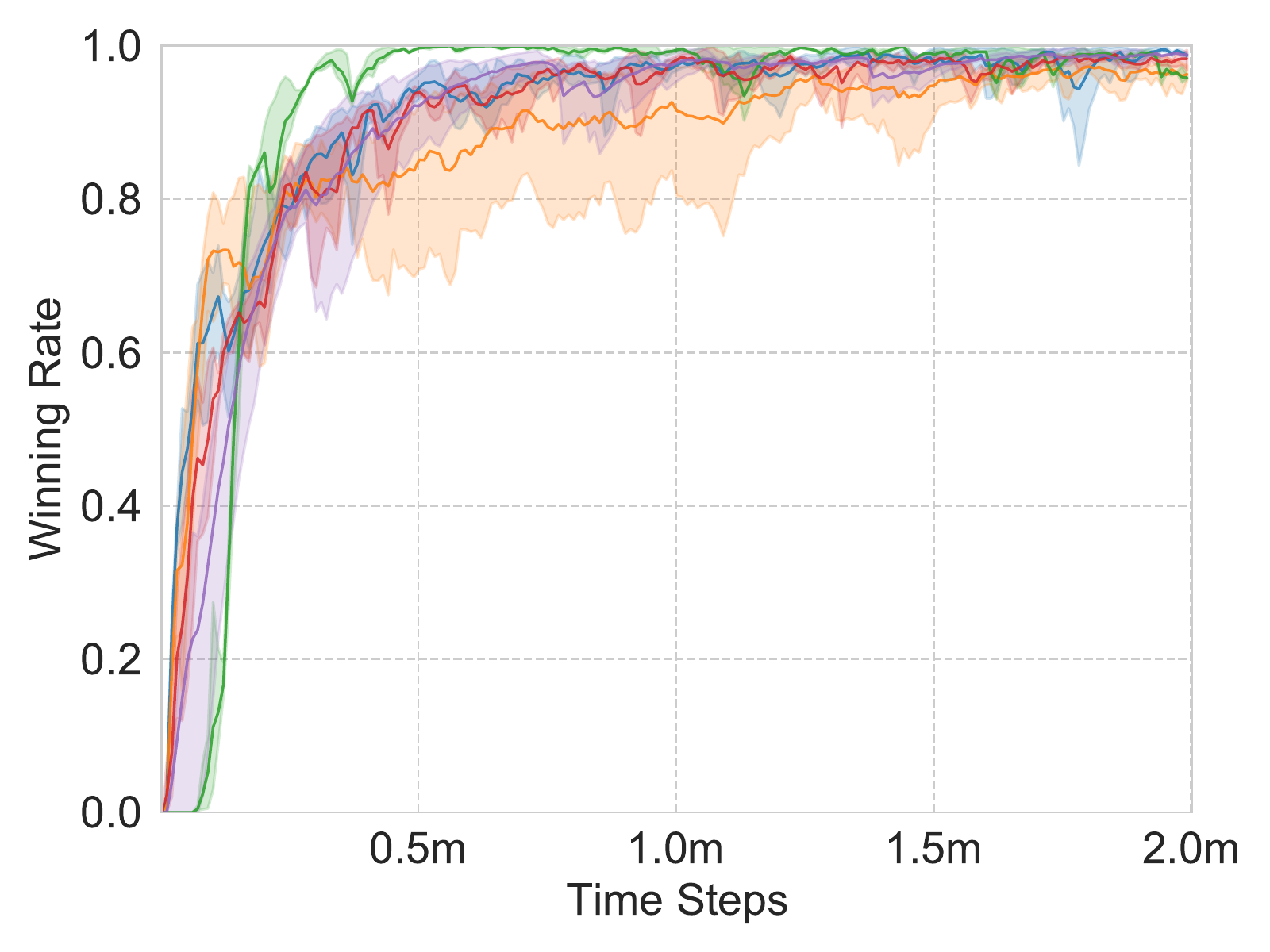}\label{fig-8m}
\end{minipage}
}
\subfloat[3s5z]{
\begin{minipage}{5cm}
\centering
\includegraphics[scale=0.325]{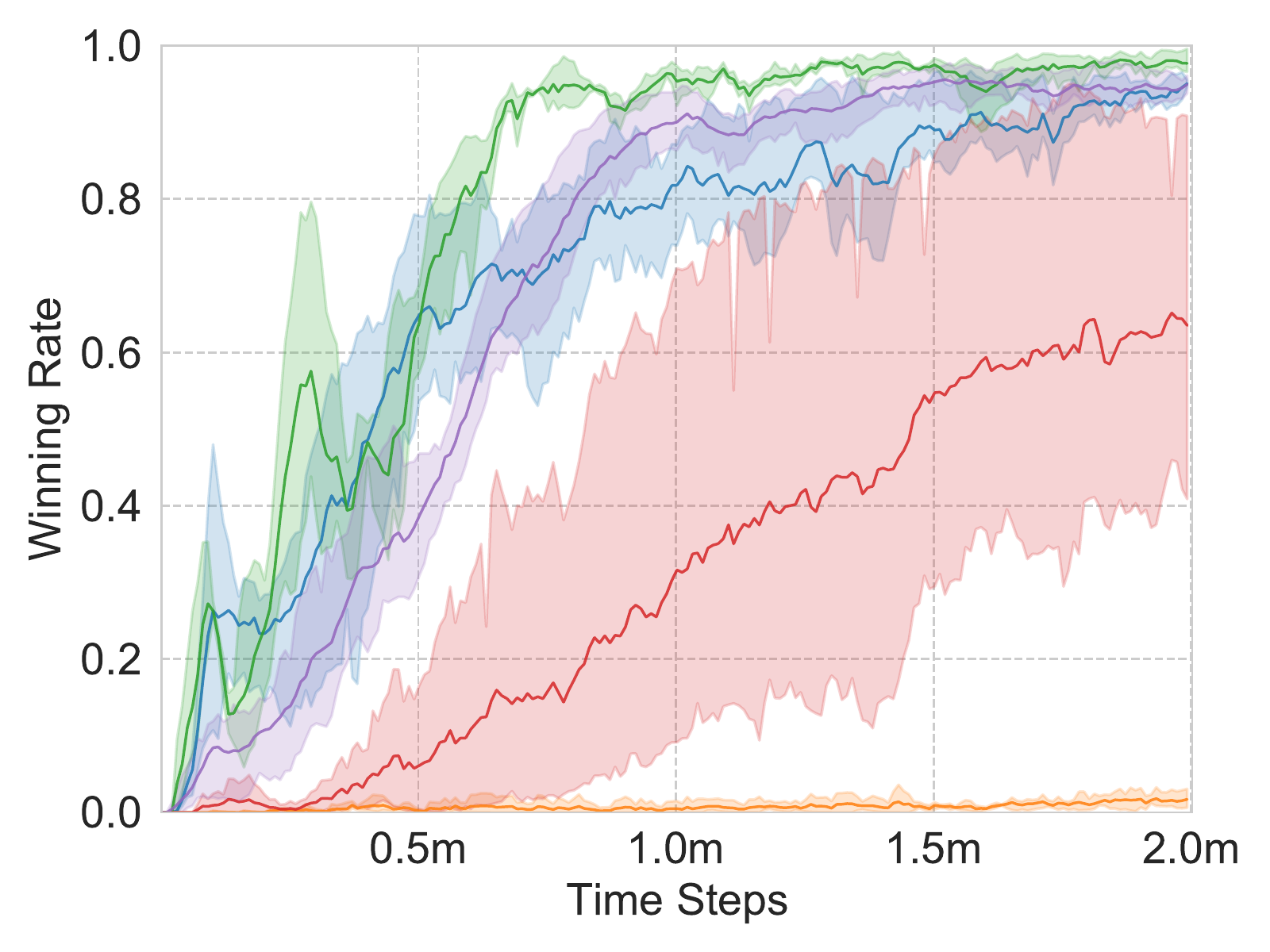}\label{fig-3s5z}
\end{minipage}
}
\subfloat[3s5z\_vs\_3s6z]{
\begin{minipage}{5cm}
\centering
\includegraphics[scale=0.325]{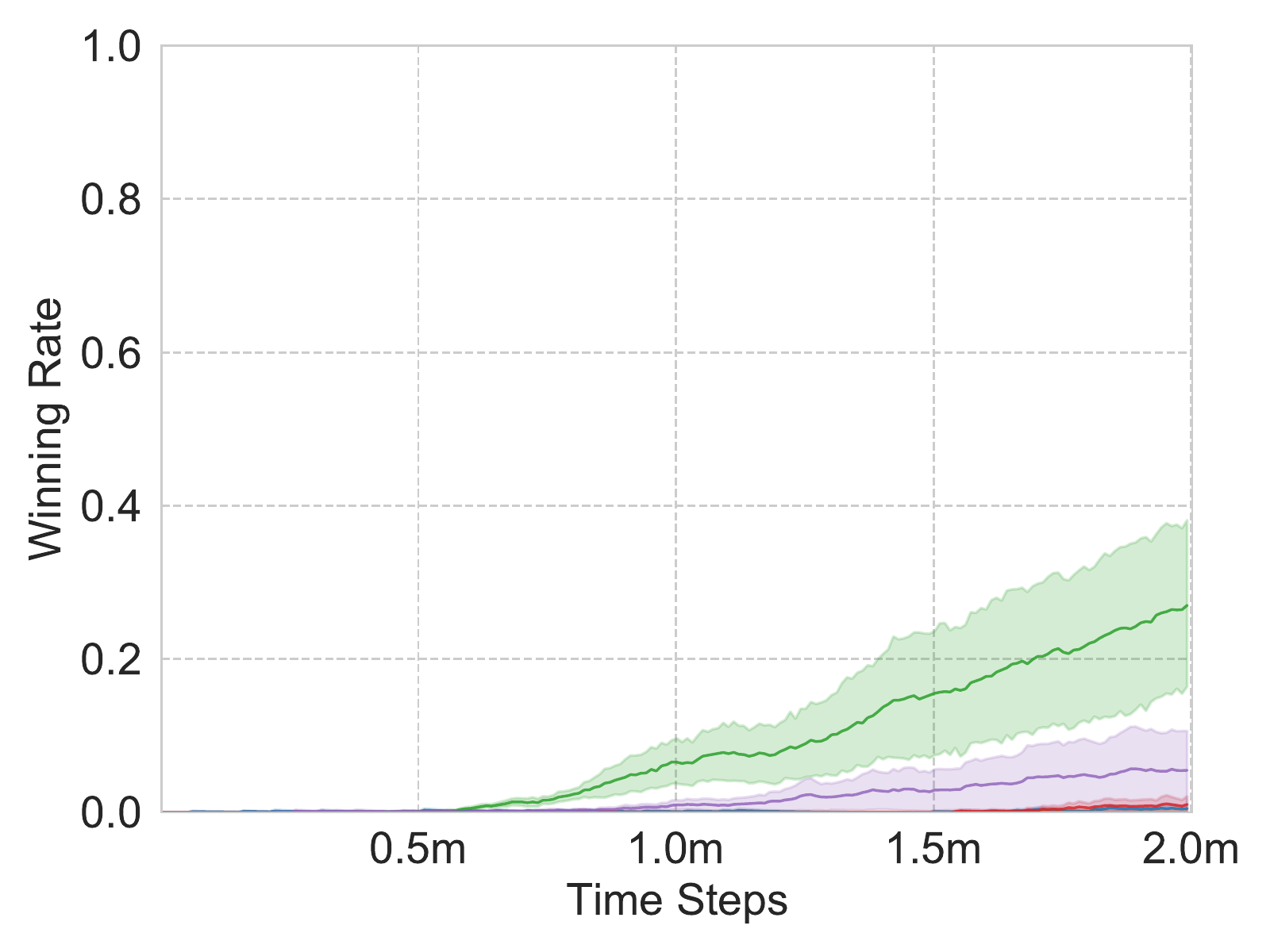}\label{fig-8v9}
\end{minipage}
}
\caption{Winning rates for ASN, VDN, QMIX, QTRAN and UNMAS. UNMAS achieves the highest winning rate in all scenarios, especially the most difficult scenario $3s5z\_vs\_3s6z$.}
\label{fig:win-rate}
\end{figure*}

We compare UNMAS with VDN \cite{Sunehag2018}, QMIX \cite{Rashid2018}, QTRAN \cite{Son2019}, and ASN \cite{wang2020action}. VDN factorizes the joint action-value $\textbf{Q}(\textbf{o}_{t}, \textbf{u}_{t})$ by the summation of the individual action-value function $Q_{i}(o_{i, t}, u_{i, t};\theta_{i})$:
\begin{equation}
\textbf{Q}(\textbf{o}_{t}, \textbf{u}_{t}) = \sum_{i=1}^{n}Q_{i}(o_{i, t}, u_{i, t};\theta_{i})
\end{equation}
where $\theta_{i}$ is the parameter of the newtork representing the individual action-value function. QMIX employs a mixing network to represent the joint action-value function. This network takes the action-value $Q_{i}(o_{i, t}, u_{i, t};\theta_{i})$ of all agents as input and the joint action-value $\textbf{Q}(\textbf{o}_{t}, \textbf{u}_{t}; \boldsymbol{\theta}_{joint})$ as output, where $\boldsymbol{\theta}_{joint}$ is its parameter. By being taken absolute values, the weights of the mixing network keep non-negative. QTRAN factorizes the joint action-value by designing a special learning objective. Therefore, it can transform the original joint action-value function into a new, easily factorizable one with the same optimal actions in both functions. ASN considers the action semantics between agents to compute the action-values and reconstruct the observation of agent $i$ at timestep $t$ as follows: $o_{i, t}=\{o_{i, t}^{env}, m_{i, t}, o_{i, t}^1, ...,o_{i, t}^{i-1}, o_{i, t}^{i+1},o_{i, t}^n\}$, where $o_{i, t}^{env}$ is the observation about environment, $m_{i, t}$ is the private information about agent $i$ itself, and $o_{i, t}^j$ is the observation of agent $i$ to other agent $j$. 

Experiments are performed on the following symmetric maps, where agents and enemies have the same numbers and types: 3 Marines ($3m$), 8 Marines ($8m$), 2 Stalkers and 3 Zealots ($2s3z$), 3 Stalkers and 5 Zealots ($3s5z$). Besides, the asymmetric maps including 5 Marines against 6 Marines ($5m\_vs\_6m$), 3 Stalkers and 5 Zealots against 3 Stalkers and 6 Zealots ($3s5z\_vs\_3s6z$) are also chosen for comparison. The description of these units is provided in Table \ref{table:des}.

\begin{table}[ht]
\caption{Description of StarCraft II units.}
\label{table:des}
\renewcommand{\arraystretch}{1.1}
\setlength{\tabcolsep}{7mm}
\begin{center}
\begin{tabular}{ccc}
\toprule
Unit & Race & Type \\ \midrule
Marine & Terran & Long-range \\
Stalker & Protoss & Long-range  \\
Zealot & Protoss & Short-range  \\
\bottomrule
\end{tabular}
\end{center}
\end{table}

These maps are provided by SMAC and could test the performance of UNMAS on StarCraft II micro-management scenario. VDN \cite{Sunehag2018}, QMIX \cite{Rashid2018}, and QTRAN \cite{Son2019} are commonly used for comparison in CTDE methods and have already been implemented in SMAC. The code of ASN \cite{wang2020action} is also open-source. Besides, the detailed implementation of networks and hyper parameters are provided in Appendix Table \ref{table:hp}.

\begin{table*}[ht]
\caption{The test winning rates of VDN, QMIX, QTRAN, ASN, UNMAS, and other state-of-the-art methods.}
\label{table:wr}
\renewcommand{\arraystretch}{1.1}
\begin{center}
\begin{tabular}{ccccccccccc}
\toprule
Map 				& VDN$\dagger$	& QMIX$\dagger$ 	& QTRAN$\dagger$	& ASN$\dagger$	& RODE$\dagger$	& QPD$\ddagger$	& MADDPG$\ddagger$	& Qatten$\ddagger$	& QPLEX$\ddagger$	& UNMAS 		\\ \midrule
3m 				& 98 			& \textbf{99}		& \textbf{99}		& 98				& \textbf{99}		& 92			& 98   			& - 				& - 				& \textbf{99} 	\\ 
8m 				& 97 			& \textbf{98}		& 97				& \textbf{98}		& \textbf{98}		& 93			& \textbf{98}		& - 				& - 				& 97		  	\\ 
2s3z 			& 92 			& 98				& 81 			& \textbf{99}		& \textbf{99}  	& \textbf{99}	& 94				& 96 				& 97 				& 96 			\\
3s5z 			& 63 			& 95				& 2				& 95				& 92 			& 91			& 72				& 94 				& 92				& \textbf{98}  	\\
5m\_vs\_6m 		& 73 			& 72				& 55				& 72				& 75				& -			& - 				& 74 				& - 				& \textbf{82}  	\\
3s5z\_vs\_3s6z 	& 1 				& 1				& 0				& 5				& 1 				& 10			& -				& 16 				& - 				& \textbf{28}  	\\
\bottomrule
\end{tabular}
\begin{tablenotes}
     \item[1] $\dagger$ Obtained by the experiments that we conduct. 
     \item[2] $\ddagger$ Obtained by the results provided in literature.
\end{tablenotes}
\end{center}
\end{table*}

\subsection{Main Results}
We apply these methods on the maps, and choose the test winning rate as the comparison metric. Each experiment is repeated three times for average results, and the learning curves are shown in Fig. \ref{fig:win-rate}.

\begin{enumerate}[labelsep = .5em, leftmargin = 0em, itemindent = 2em]
\item[1)] \emph{Symmetric Scenarios}: The maps with symmetric setting include: $3m$, $8m$, $2s3z$ and $3s5z$. 

\begin{enumerate}[leftmargin = 2em]
\item[a)] \emph{Homogeneous Scenarios}: $3m$ and $8m$. In these scenarios, there is only one type of unit: Marine. Therefore, the individual action-value network only needs to represent the policy of one type of unit. As shown in Fig. \ref{fig:win-rate}(a) and Fig. \ref{fig:win-rate}(d), all of the methods quickly achieve a winning rate close to 100\%. The curve of UNMAS starts rising slowly, but still achieves a higher winning rate faster than other methods. 

\item[b)] \emph{Heterogeneous Scenarios}: $2s3z$ and $3s5z$. In these scenarios, there are two types of units: Zealot and Stalker, so the individual action-value network needs to represent the policies of different types of units. Different types of units have different roles in combat, so heterogeneous scenarios are more difficult than homogeneous scenarios. As shown in Fig. \ref{fig:win-rate}(b) and Fig. \ref{fig:win-rate}(e), UNMAS achieves the highest winning rate on these maps, especially $3s5z$. The $3s5z$ map is more difficult than $2s3z$, because the increase in the number of units makes the multi-agent system harder to control. In the $3s5z$ experiment, VDN and ASN can not perform well, and QTRAN fails to win. Only UNMAS and QMIX can win combats stably.

\end{enumerate}

\item[2)] \emph{Asymmetric Scenarios}: The maps with asymmetric setting include: $5m\_vs\_6m$ and $3s5z\_vs\_3s6z$. It means that the number of enemies is greater than that of own agents.

\begin{enumerate}[leftmargin = 2em]
\item[a)] \emph{Homogeneous Scenarios}: $5m\_vs\_6m$. In this scenario, 5 Marines controlled by the algorithms have to combat against 6 enemy Marines. Even if there is only one difference in the number, the control difficulty in order to win increases considerably. Therefore, the policies of agents are required to be more elaborate in order to win. A more elaborate policy means that the agents make fewer mistakes. As shown in Fig. \ref{fig:win-rate}(c), no method achieves 100\% winning rate. Among the tested methods, UNMAS achieves the highest winning rate, which means that the multi-agent system using UNMAS is more possible to achieve cooperation and choose correct actions.

\item[b)] \emph{Heterogeneous Scenarios}: $3s5z\_vs\_3s6z$. It is an asymmetric and heterogeneous scenario, the most difficult of all. With an extra Zealot, the enemy is better able to block our Zealots and attack our important \emph{damage dealer} Stalkers. Therefore, the multi-agent system not only needs to ensure the stability of its policy but also explores a better policy to win. As shown in Fig. \ref{fig:win-rate}(f), UNMAS achieves the highest winning rate. 

\end{enumerate}

\end{enumerate}

Throughout the plot in Fig. \ref{fig:win-rate}, it is shown that UNMAS achieves uniform convergence on all experimental maps, and is especially superior on the most difficult scenario 3s5z\_vs\_3s6z. The final winning rates of different methods are listed in Table \ref{table:wr}. Compared with QMIX and other methods, we provide a more flexible network structure to represent the joint action-value function, so it is more suitable for training on unshaped scenarios. As for the design of the individual action-value network, ASN uses the GRU layer in unit-oriented stream, while we remove it. Since we are considering a Dec-POMDP especially with limited sight range, a unit wandering close to an agent's sight range will cause it to receive the unit’s observations intermittently. Therefore, it may lead to a wrong computation of the hidden state for GRU layer and have a negative effect on the strategy of agent.

Besides, for other state-of-the-art MARL methods, due to the limits of reproduction, we only compare with the results provided in their literature, and show them in Table \ref{table:wr}. MADDPG \cite{NIPS2017_7217} provides a centralized critic for the whole system and a group of decentralized actors for each agent. Qatten \cite{Yang2020b} designs a multi-head attention network to approximate joint action-value function. RODE \cite{Wang2021rode} allows agents with a similar role to share similar behaviors. QPD \cite{Yang2020a} employs integrated gradients method to factorize the joint action-value of multi-agent system into individual action-values of agents. QPLEX \cite{Wang2020a} takes a duplex dueling network to factorize the joint action-value. 

The results shown in Table \ref{table:wr} are the average winning rates after 2 million timesteps. In the article of RODE, the experiment on most maps is the result of training 5 million timesteps, we run RODE on each map and test its winning rate at 2 million timesteps. Since MADDPG does not experiment on StarCraft II micro-management scenarios, we compare with the results provided by \cite{papoudakis2020comparative}, in which the authors implement several MARL methods. Although these methods achieve similar performance to us in symmetric maps, UNMAS still has obvious advantage in the most difficult scenario 3s5z\_vs\_3s6z. These results indicate that UNMAS is still competitive among those state-of-the-art methods. 

\begin{figure*}[ht]
\centering
\subfloat{
\begin{minipage}{12cm}
\centering
\includegraphics[scale=0.32]{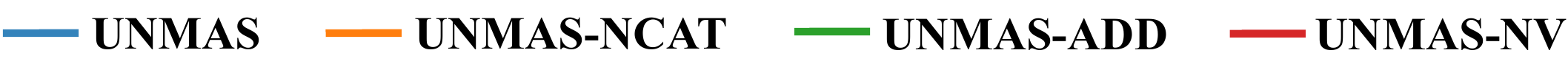}
\end{minipage}
}
\setcounter{subfigure}{0}
\subfloat[3s5z]{
\begin{minipage}{5cm}
\centering
\includegraphics[scale=0.325]{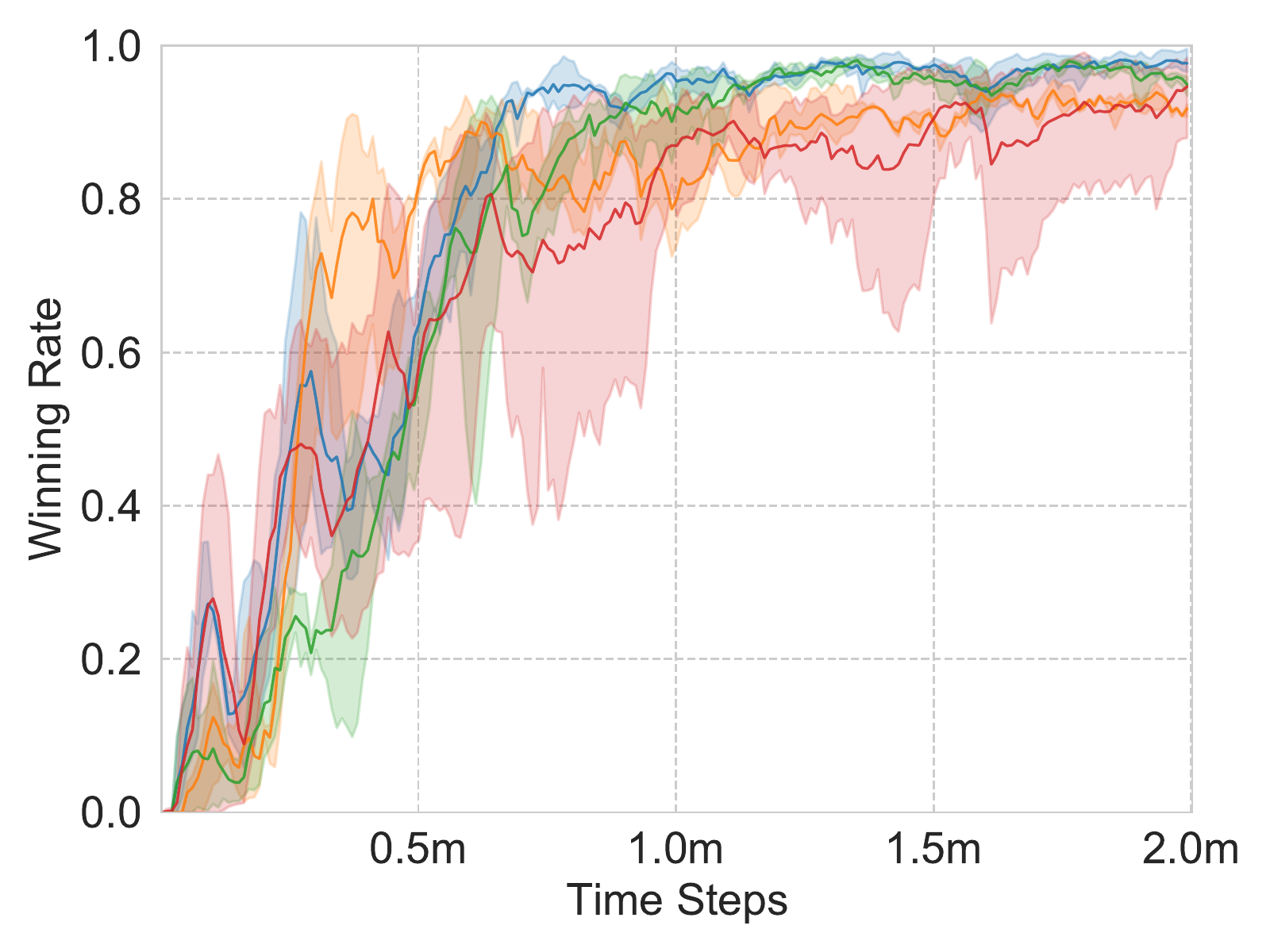}
\end{minipage}
}
\subfloat[5m\_vs\_6m]{
\begin{minipage}{5cm}
\centering
\includegraphics[scale=0.325]{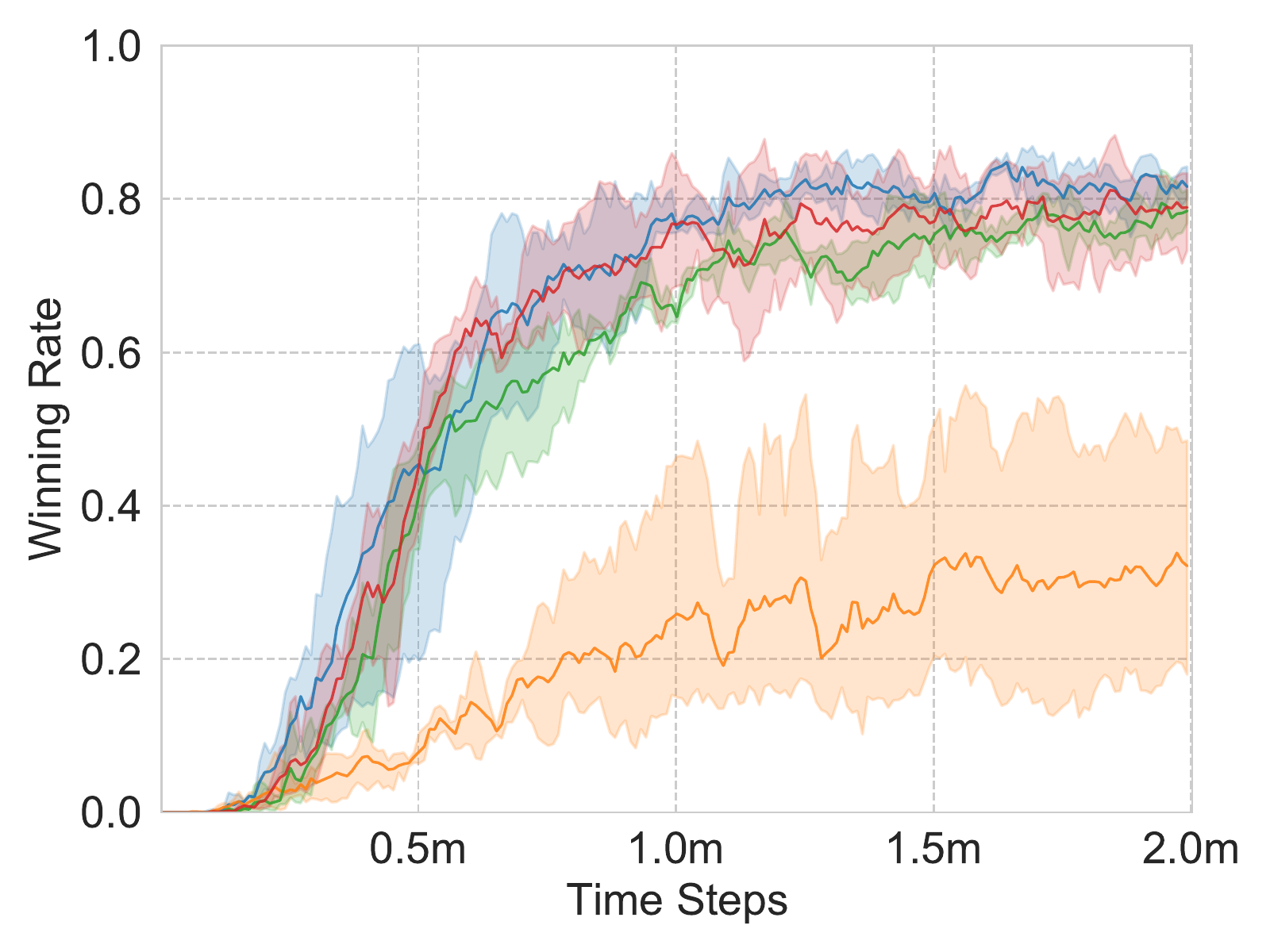}
\end{minipage}
}
\subfloat[3s5z\_vs\_3s6z]{
\begin{minipage}{5cm}
\centering
\includegraphics[scale=0.325]{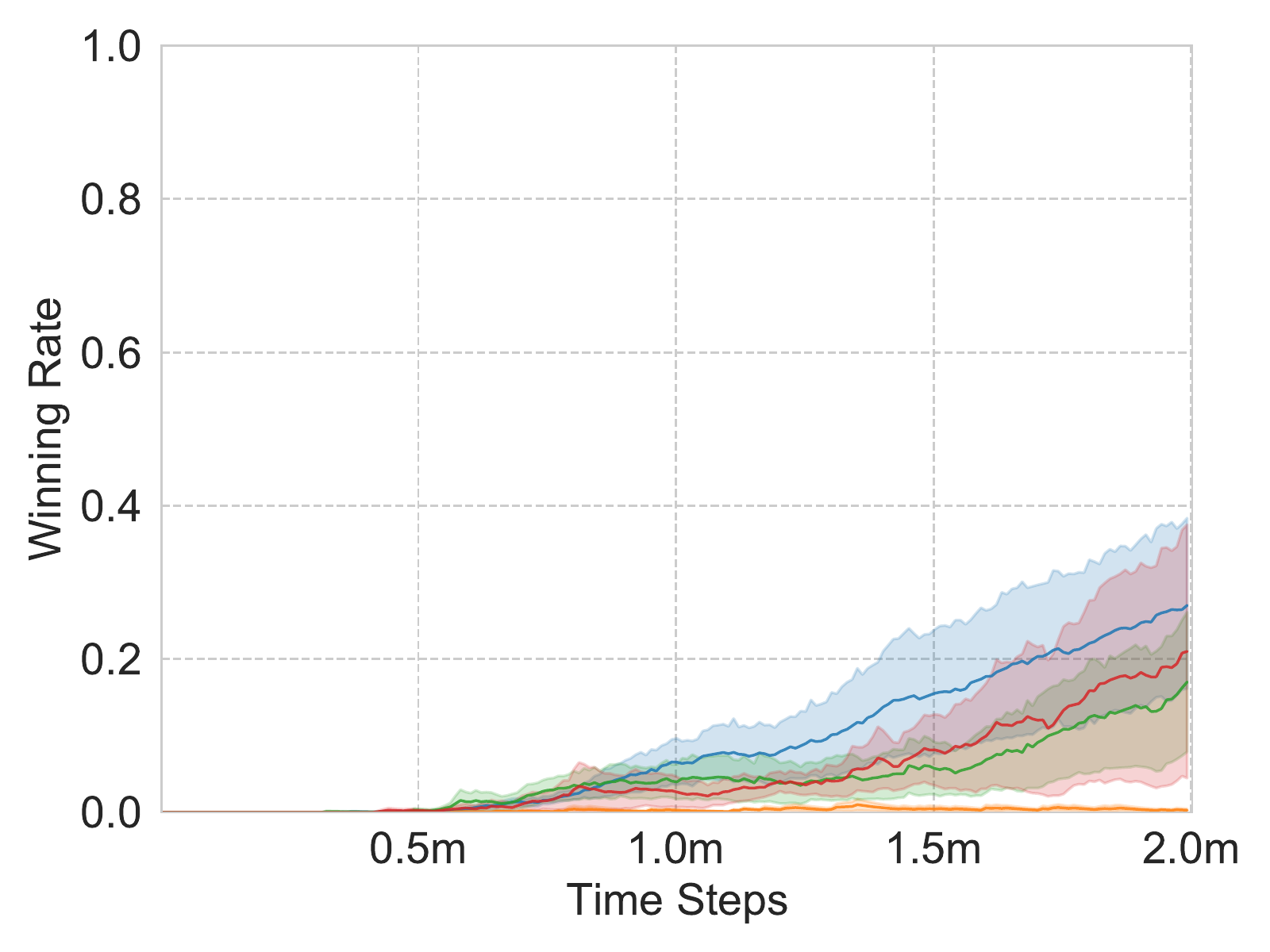}
\end{minipage}
}
\caption{Winning rates of ablation experiments. The ablation results demonstrate the effect of three elements.}
\label{fig:ablation}
\end{figure*}

\subsection{Ablation Results}
In order to investigate the effect of (i) the weight $k_{i, t}$, (ii) the value $v_{t}$ in the self-weighting mixing network, and (iii) the concat operation in the individual action-value network, we conduct ablation experiments on three maps: $3s5z$, $5m\_vs\_6m$ and $3s5z\_vs\_3s6z$.

\begin{enumerate}
\item[1)] \emph{The effect of weight} $k_{i, t}$. We remove the weight $k_{i, t}$ in the self-weighting mixing network and replace the joint action-value with the following equation:
\begin{equation}
\textbf{Q}(\textbf{o}_{t}, \textbf{u}_{t}; \boldsymbol{\theta}_{joint}) = \frac{1}{n} \sum_{i=1}^{n} q_{i, t}^{'} + v_{t},
\end{equation}
which denotes that the weight $k_{i, t}$ of each agent is set to 1 in any case. We refer to this method as UNMAS-ADD. Fig. \ref{fig:ablation} shows that the winning rate of self-weighting mixing network decreases under the condition of fixed weights $k_{i, t}$. One possible explanation is that because the weights are fixed, the self-weighting mixing network cannot correctly estimate the true joint action-value based on the contribution of each agent to the joint action-value. 

\begin{figure}[h]
\centering
\subfloat[Before Fighting]{
\begin{minipage}{4.25cm}
\centering
\includegraphics[scale=0.265]{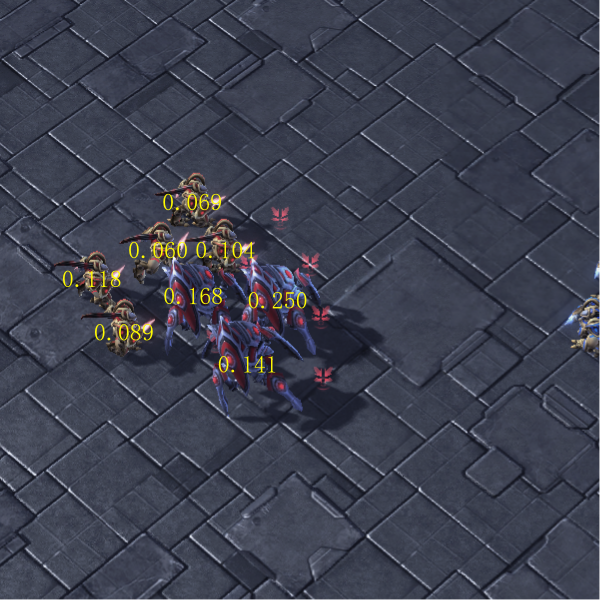}\label{fig-before}
\end{minipage}
}
\subfloat[Fighting Hand-to-hand]{
\begin{minipage}{4.25cm}
\centering
\includegraphics[scale=0.25]{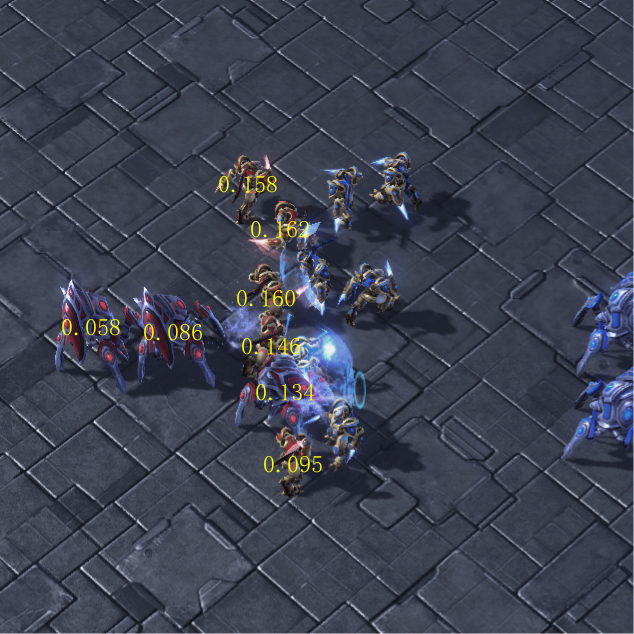}\label{fig-hand}
\end{minipage}
}
\caption{Two screenshots of the combat on the $3s5z\_vs\_3s6z$ map. The weight ratio $k_{i, t} / \sum_j k_{j, t}$ is marked on the position of each agent $i$.}
\label{fig:weight}
\end{figure}

In order to illustrate the effect of $k_{i, t}$ specifically, we provide two screenshots of the combat on the $3s5z\_vs\_3s6z$ map and mark the weight ratio $k_{i, t} / \sum_j k_{j, t}$ of each agent as shown in Fig. \ref{fig:weight}. The screenshots indicate that agents have different contribution to the joint action-value. When there is no close combat between two sides as shown in Fig. \ref{fig:weight}(a), the long-range unit Stalker is more important. Its average weight of 0.186 is larger than Zealot's 0.073. Once they are fighting hand-to-hand as shown in Fig. \ref{fig:weight}(b), the short-range unit Zealot becomes vital. Its average weight of 0.120 is larger than Zealot's 0.092. Besides, within the same type of units, agents get larger weights if there are more enemies around them. Taking the Zealots in Fig. \ref{fig:weight}(b) as an example, the unit surrounded by enemies has larger weight (0.162) than the one that is alone (0.095).

\item[2)] \emph{The effect of value} $v_{t}$. In the second experiment, we remove the value $v_t$ in the joint action-value. Therefore, the joint action-value can be calculated by the following formula:
\begin{equation}
\textbf{Q}(\textbf{o}_{t}, \textbf{u}_{t}; \boldsymbol{\theta}_{joint}) = \frac{1}{n} \sum_{i=1}^{n} q_{i, t}^{'} \cdot k_{i, t}.
\end{equation}
We refer to this method as UNMAS-NV. Fig. \ref{fig:ablation} shows that the introduction of $v_{t}$ increases the efficiency of approximation, which leads to higher winning rate. It indicates that the lack of a bias term makes it more difficult for the self-weighting mixing network to approximate the joint action-value function.

\item[3)] \emph{The effect of the concat operation}. In the third experiment, we remove the concat operation in the individual action-value network. It means that the Q values used to evaluate unit-oriented actions only depend on the observations of target units rather than the observation history. We refer to this method as UNMAS-NCAT. Fig. \ref{fig:ablation} shows that the concat operation is critical to the performance of the agent. The winning rate of UNMAS-NCAT in ablation experiments is much lower than UNMAS.

\end{enumerate}

According to the ablation experiments, we figure out the importance of the weight $k_{i, t}$, value $v_{t}$, and the concat operation. The reason that UNMAS achieves the highest winning rate becomes clear. In the factorization of joint action-value, UNMAS provides the weights for each alive agent and ignores the agents that are killed by enemies, which leads to a more proper factorization. The bias term also helps approximate the joint action-value function. In estimating the action-value of an agent, UNMAS evaluates the actions in two different subsets respectively with the help of the concat operation, making the evaluation more accurate.

\begin{figure*}[ht]
\centering
\subfloat[More to Fight Less]{
\begin{minipage}{4.25cm}
\centering
\includegraphics[scale=0.276]{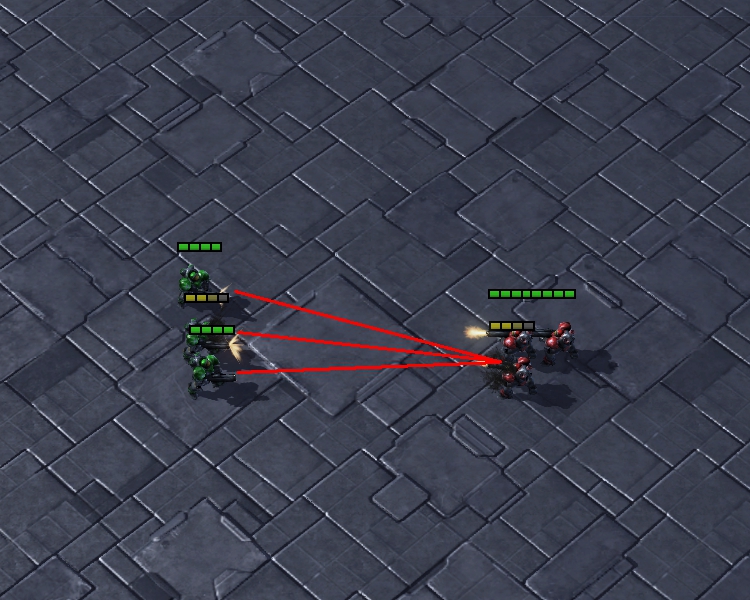}\label{fig-mtol}
\end{minipage}
}
\subfloat[Damage Sharing]{
\begin{minipage}{4.25cm}
\centering
\includegraphics[scale=0.276]{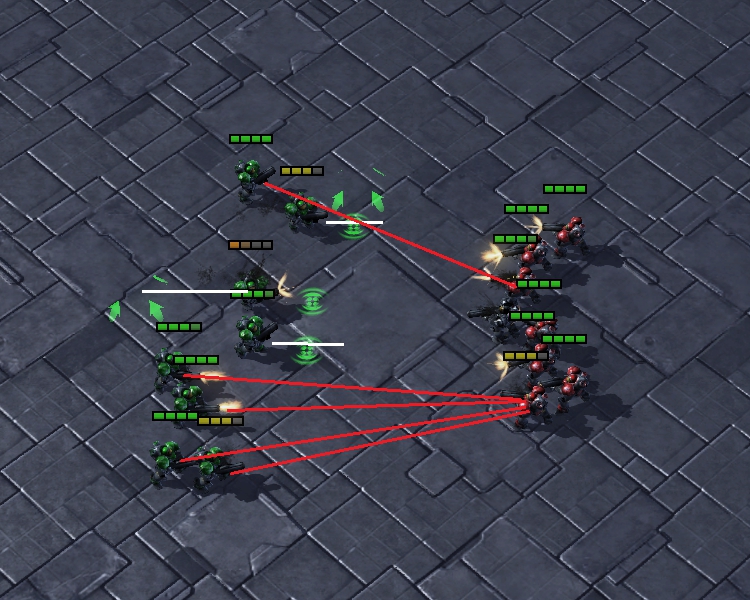}\label{fig-damage}
\end{minipage}
}
\subfloat[Block Enemies]{
\begin{minipage}{4.25cm}
\centering
\includegraphics[scale=0.23]{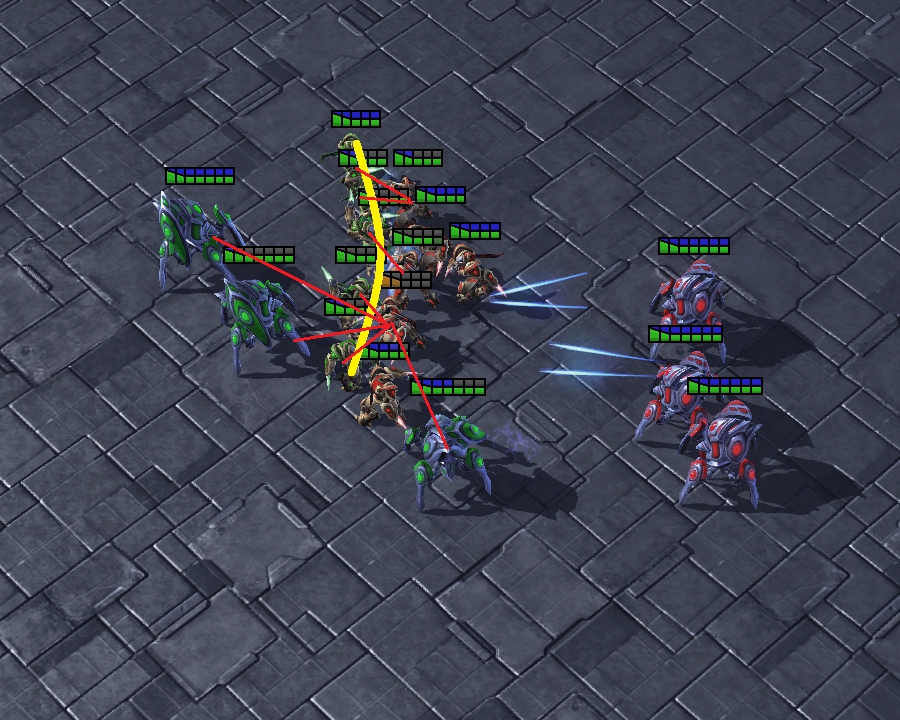}\label{fig-block}
\end{minipage}
}
\subfloat[Defensive Counterattack]{
\begin{minipage}{4.25cm}
\centering
\includegraphics[scale=0.23]{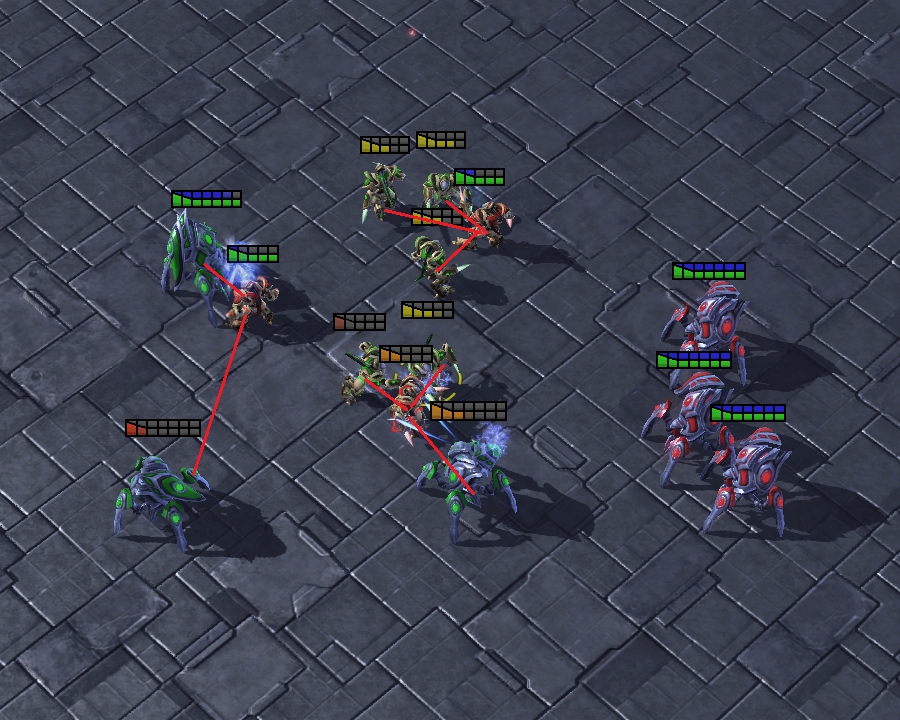}\label{fig-counter}
\end{minipage}
}
\caption{Strategies learned by the agents using UNMAS. There are two basic strategies \emph{more to fight less} and \emph{damage sharing}, and two advanced strategies \emph{block enemies} and \emph{defensive counterattack} to address more difficult scenarios.}
\label{fig:strategy}
\end{figure*}
\subsection{Strategy Analysis}
To understand what UNMAS has learned to achieve the performance, we analyze the strategies of agents using UNMAS according to the combat replay in this subsection.

\begin{enumerate}[labelsep = .5em, leftmargin = 0em, itemindent = 2em]
\item[1)] \emph{Homogeneous Scenarios}: $3m$, $8m$ and $5m\_vs\_6m$. In these scenarios, the combat is conducted between Marines. Next is the analysis of the strategies executed by the multi-agent system.

\begin{enumerate}[leftmargin = 2em]
\item[a)] \emph{More to Fight Less}: The multi-agent system adjusts the formation to form a situation where more agents attack fewer enemies in the combat. It is a basic strategy for micro-management scenario. Taking an example, as shown in Fig. \ref{fig:strategy}(a), agents try to concentrate on attacking an enemy Marine. The red line indicates that the agent is attacking. By adopting this strategy, the multi-agent system is able to eliminate the enemies as quickly as possible to reduce the damage caused by them.

\item[b)] \emph{Damage Sharing}: The agents with higher health share the damage for agents with lower health. Since only alive agents can cause damage to enemies, it is important to ensure the survival of agents who is in danger. As shown in Fig. \ref{fig:strategy}(b), when the health of the agent is low, it retreats a distance from the enemies to allow the other agents to share the damage. The white line indicates that the agent is moving. At the same time, the agents around it step forward in the enemy's direction to ensure successful damage sharing. By adopting this strategy, the agents are able to survive longer to maximize the damage they cause.

\end{enumerate}
\end{enumerate}

In the homogeneous scenarios, the multi-agent system performs the above strategies to defeat the enemies. Although other methods like QMIX can also learn these strategies, the agents using UNMAS make fewer mistakes.

\begin{enumerate}[labelsep = .5em, leftmargin = 0em, itemindent = 2em]
\item[2)] \emph{Heterogeneous Scenarios}: $2s3z$, $3s5z$ and the most difficult map $3s5z\_vs\_3s6z$. In these scenarios, the combat is conducted between two types of units: Zealot and Stalker. The former is short-range and the latter is long-range. For scenarios where units have different roles, the multi-agent system not only needs to execute the strategies analyzed above but also needs new strategies to accommodate the change. Next is the analysis of the strategies executed by the multi-agent system.

\begin{enumerate}[leftmargin = 2em]
\item[a)] \emph{Block Enemies}: The Zealots try to block the enemy Zealots to protect the Stalkers. Since the Stalker is a long-range unit, it is able to cause damage to the enemies without being attacked. As shown in Fig. \ref{fig:strategy}(c), the multi-agent system using UNMAS learns the strategy of placing the Zealots between Stalkers and enemies to ensure the safety of Stalkers. The yellow line is the defensive line formed by Zealots. Through this strategy, the multi-agent system can maximize its damage as much as possible.

\item[b)] \emph{Defensive Counterattack}: Agents focus on eliminating enemy Zealots who break through the defensive line as a defense, and then continue their previous attack as a counterattack. In the actual combat, the defensive line may not completely separate the enemies from Stalkers. As shown in Fig. \ref{fig:strategy}(d), the multi-agent system using UNMAS learns to attack the enemy who breaks through the defensive line. They also perform a \emph{more to fight less} strategy to eliminate enemies as quickly as possible. In addition, sometimes one Stalker chooses to attack the enemies at the edge of the battlefield to disperse pressure from other allies. 

\end{enumerate}
\end{enumerate}

It should be noted that, in the heterogeneous scenarios, the multi-agent system not only learns the strategies mentioned above, but also learns \emph{more to fight less} and \emph{damage sharing} as in homogeneous scenarios. Other methods all fail in $3s5z\_vs\_3s6z$ because they just perform the \emph{more to fight less} strategy as in homogeneous scenarios, while UNMAS achieves 28\% winning rate because of the above strategies.

\section{Conclusion}
This paper proposes UNMAS, a novel multi-agent reinforcement learning method that is more adaptable to the unshaped scenario, in which the number of agents and the size of action set change over time. UNMAS factorizes the joint action-value by mapping Q values nonlinearly and calculating its weights in the joint action-value for each agent. We theoretically analyze that the value factorization of UNMAS meets the IGM condition. In order to adapt to the change in the size of action set, the individual action-value network uses two network streams to evaluate the actions in \emph{environment-oriented} subset and \emph{unit-oriented} subset.

We compare UNMAS with VDN, QMIX, QTRAN, and ASN in StarCraft II micro-management scenario. Our results show that UNMAS achieves state-of-the-art performance among tested algorithms and also learns effective strategies in the most difficult scenario $3s5z\_vs\_3s6z$ that other algorithms fail in.

\section*{Acknowledgment}
The authors would like to thank Professor Hongbin Sun of Xi'an Jiaotong University for his suggestions. 


%

\appendices
\section{Environmental Settings}
\setcounter{table}{0}
\renewcommand{\thetable}{A.\arabic{table}}
We use SMAC as the experimental environment. Alive agents obtain their local observations from environment and execute their actions. The observation provided by SMAC consists of the information about agent self, other agents, and the enemies within sight range. In detail, the information contains the following elements:
\begin{enumerate}
\item \emph{Distance}: the distance between agent and other units;
\item \emph{Relative coordinates}: the relative coordinates of x and y between agent and other units;
\item \emph{Health}: the health percentage of agent and other units;
\item \emph{Shield}: the shield percentage of agent and other units;
\item \emph{Unit type}: the one-hot coding for the unit type of agent and other units;
\item \emph{Last action}: the action executed by agent at last timestep;
\item \emph{Agent Index}: the index used to distinguish agents.
\end{enumerate}

Similarly, the global state also provides the above information. However, the difference is that the relative information takes the center point of the map as the reference point. The information included in the state contains all the alive agents on the map instead of those only within the \emph{sight range}. 

\section{Parameter Settings}
\setcounter{table}{0}
\renewcommand{\thetable}{B.\arabic{table}}
In the experiments, all methods adopt the \emph{same} hyper parameters, which are shown in Table \ref{table:hp}, and are the default values in PyMARL. 

\begin{table}[ht]
\caption{Hyper-parameters of experimental methods, including ASN, VDN, QMIX, QTRAN, RODE, and UNMAS.}
\label{table:hp}
\renewcommand{\arraystretch}{1.2}
\begin{center}
\begin{tabular}{|m{2cm}<\centering|m{2.75cm}<\centering|m{2.25cm}<\centering|}
\hline
Setting			  		 				& Name         						& Value  				\\
\hline
\multirow{7}{*}{Training Settings}   	 			& Size of Replay buffer \emph{D}		& 5000 episodes			\\ \cline{2-3}
   						 				& Batch size $b$					& 32 episodes			\\ \cline{2-3}
    										& Testing interval					& 10000 timesteps  		\\ \cline{2-3}
    										& Target update interval				& 200 episodes			\\ \cline{2-3}
    										& Maximum timesteps					& 2 million timesteps		\\ \cline{2-3}
    										& Exploration rate $\epsilon$			& 1.0 to 0.05			\\ \cline{2-3}
    					 					& Discount factor $\gamma$				& 0.99 				\\
\hline
\multirow{7}{*}{Network Settings}		 		& Self-weighting mixing network unit		& 32 					\\ \cline{2-3}
	   					 				& Hyper network unit					& 64 					\\ \cline{2-3}
    					 					& GRU layer unit					& 64 					\\ \cline{2-3}
    					 					& Optimizer						& RMSProp				\\ \cline{2-3}
    					 					& RMSProp $\alpha_R$					& 0.99 				\\ \cline{2-3}
    					 					& RMSProp $\epsilon_R$				& 0.00001				\\ \cline{2-3}
    					 					& Learning rate $\alpha$				& 0.0005				\\
\hline
\end{tabular}
\end{center}
\end{table}

The hypernetwork of UNMAS, which is used to calculate the weights and biases of self-weighting mixing network, adopts the same settings as QMIX. $v_{t}$, the final element of self-weighting mixing network, is the output of a network with two layers and one ReLU activation. Other parameters related to the networks are also shown in Table \ref{table:hp}.

\bibliographystyle{IEEEtran}
\bibliography{IEEEabrv, ref}

\begin{thebibliography}{10}
\providecommand{\url}[1]{#1}
\csname url@samestyle\endcsname
\providecommand{\newblock}{\relax}
\providecommand{\bibinfo}[2]{#2}
\providecommand{\BIBentrySTDinterwordspacing}{\spaceskip=0pt\relax}
\providecommand{\BIBentryALTinterwordstretchfactor}{4}
\providecommand{\BIBentryALTinterwordspacing}{\spaceskip=\fontdimen2\font plus
\BIBentryALTinterwordstretchfactor\fontdimen3\font minus
  \fontdimen4\font\relax}
\providecommand{\BIBforeignlanguage}[2]{{%
\expandafter\ifx\csname l@#1\endcsname\relax
\typeout{** WARNING: IEEEtran.bst: No hyphenation pattern has been}%
\typeout{** loaded for the language `#1'. Using the pattern for}%
\typeout{** the default language instead.}%
\else
\language=\csname l@#1\endcsname
\fi
#2}}
\providecommand{\BIBdecl}{\relax}
\BIBdecl

\bibitem{Cao2013a}
Y.~Cao, W.~Yu, W.~Ren, and G.~Chen, ``An overview of recent progress in the
  study of distributed multi-agent coordination,'' \emph{IEEE Transactions on
  Industrial informatics}, vol.~9, no.~1, pp. 427--438, 2013.

\bibitem{Jiang2019}
W.~Jiang, G.~Feng, S.~Qin, T.~S.~P. Yum, and G.~Cao, ``Multi-agent
  reinforcement learning for efficient content caching in mobile {D2D}
  networks,'' \emph{IEEE Transactions on Wireless Communications}, vol.~18,
  no.~3, pp. 1610--1622, 2019.

\bibitem{Tang2018b}
Z.~Tang, D.~Zhao, Y.~Zhu, and P.~Guo, ``Reinforcement learning for build-order
  production in {StarCraft II},'' in \emph{International Conference on
  Information Science and Technology}, 2018, pp. 153--158.

\bibitem{Shao2019}
K.~Shao, Y.~Zhu, and D.~Zhao, ``{StarCraft} micromanagement with reinforcement
  learning and curriculum transfer learning,'' \emph{IEEE Transactions on
  Emerging Topics in Computational Intelligence}, vol.~3, no.~1, pp. 73--84,
  2019.

\bibitem{Rashid2018}
T.~Rashid, M.~Samvelyan, C.~Schroeder, G.~Farquhar, J.~Foerster, and
  S.~Whiteson, ``{QMIX}: Monotonic value function factorisation for deep
  multi-agent reinforcement learning,'' in \emph{International Conference on
  Machine Learning}, 2018, pp. 4295--4304.

\bibitem{Tang2019}
Z.~Tang, K.~Shao, Y.~Zhu, D.~Li, D.~Zhao, and T.~Huang, ``A review of
  computational intelligence for {StarCraft AI},'' in \emph{2018 IEEE Symposium
  Series on Computational Intelligence}, 2018, pp. 1167--1173.

\bibitem{Li}
M.~Li, Z.~Qin, Y.~Jiao, Y.~Yang, J.~Wang, C.~Wang, G.~Wu, and J.~Ye,
  ``Efficient ridesharing order dispatching with mean field multi-agent
  reinforcement learning,'' in \emph{The World Wide Web Conference}, 2019, pp.
  983--994.

\bibitem{Zhu2018}
Y.~Zhu, D.~Zhao, and Z.~Zhong, ``Adaptive optimal control of heterogeneous
  {CACC} system with uncertain dynamics,'' \emph{IEEE Transactions on Control
  Systems Technology}, vol.~27, no.~4, pp. 1772--1779, 2019.

\bibitem{Liang2019}
L.~Liang, H.~Ye, and G.~Y. Li, ``Spectrum sharing in vehicular networks based
  on multi-agent reinforcement learning,'' \emph{IEEE Journal on Selected Areas
  in Communications}, vol.~37, no.~10, pp. 2282--2292, 2019.

\bibitem{Li2019a}
H.~Li, Q.~Zhang, and D.~Zhao, ``Deep reinforcement learning-based automatic
  exploration for navigation in unknown environment,'' \emph{IEEE Transactions
  on Neural Networks and Learning Systems}, vol.~31, no.~6, pp. 2064--2076,
  2020.

\bibitem{Palmer2018}
G.~Palmer, K.~Tuyls, D.~Bloembergen, and R.~Savani, ``Lenient multi-agent deep
  reinforcement learning,'' in \emph{International Foundation for Autonomous
  Agents and Multiagent Systems}, 2018, pp. 443--451.

\bibitem{Tampuu2017}
A.~Tampuu, T.~Matiisen, D.~Kodelja, I.~Kuzovkin, K.~Korjus, J.~Aru, J.~Aru, and
  R.~Vicente, ``Multiagent cooperation and competition with deep reinforcement
  learning,'' \emph{PloS One}, vol.~12, no.~4, pp. 1--15, 2017.

\bibitem{Gupta2017}
J.~K. Gupta, M.~Egorov, and M.~Kochenderfer, ``Cooperative multi-agent control
  using deep reinforcement learning,'' in \emph{International Conference on
  Autonomous Agents and Multiagent Systems}, 2017, pp. 66--83.

\bibitem{Foerster2018}
J.~Foerster, G.~Farquhar, T.~Afouras, N.~Nardelli, and S.~Whiteson,
  ``Counterfactual multi-agent policy gradients,'' in \emph{Proceedings of the
  32nd AAAI Conference on Artificial Intelligence}, 2018.

\bibitem{Shao2018a}
K.~Shao, Y.~Zhu, and D.~Zhao, ``Cooperative reinforcement learning for multiple
  units combat in {StarCraft},'' in \emph{2017 IEEE Symposium Series on
  Computational Intelligence}, 2017, pp. 1--6.

\bibitem{Zhang2016}
Z.~Zhang, D.~Zhao, J.~Gao, D.~Wang, and Y.~Dai, ``{FMRQ}—{A} multiagent
  reinforcement learning algorithm for fully cooperative tasks,'' \emph{IEEE
  Transactions on Cybernetics}, vol.~47, no.~6, pp. 1367--1379, 2017.

\bibitem{Zhang2018}
Q.~Zhang, D.~Zhao, and F.~L. Lewis, ``Model-free reinforcement learning for
  fully cooperative multi-agent graphical games,'' in \emph{International Joint
  Conference on Neural Networks}, 2018, pp. 1--6.

\bibitem{Sunehag2018}
P.~Sunehag, G.~Lever, A.~Gruslys, W.~M. Czarnecki, V.~F. Zambaldi,
  M.~Jaderberg, M.~Lanctot, N.~Sonnerat, J.~Z. Leibo, K.~Tuyls \emph{et~al.},
  ``Value-decomposition networks for cooperative multi-agent learning based on
  team reward,'' in \emph{International Foundation for Autonomous Agents and
  Multiagent Systems}, 2018, pp. 2085--2087.

\bibitem{Wang2020}
T.~Wang, H.~Dong, V.~Lesser, and C.~Zhang, ``{ROMA}: Multi-agent reinforcement
  learning with emergent roles,'' in \emph{Proceedings of the 37th
  International Conference on Machine Learning}, 2020.

\bibitem{Yang2020}
Y.~Yang, Y.~Wen, L.~Chen, J.~Wang, K.~Shao, D.~Mguni, and W.~Zhang,
  ``Multi-agent determinantal {Q}-learning,'' \emph{arXiv preprint
  arXiv:2006.01482}, 2020.

\bibitem{Wang2020a}
J.~Wang, Z.~Ren, T.~Liu, Y.~Yu, and C.~Zhang, ``{QPLEX}: Duplex dueling
  multi-agent {Q}-learning,'' \emph{arXiv preprint arXiv:2008.01062}, 2020.

\bibitem{Yang2020a}
Y.~Yang, J.~Hao, G.~Chen, H.~Tang, Y.~Chen, Y.~Hu, C.~Fan, and Z.~Wei,
  ``Q-value path decomposition for deep multiagent reinforcement learning,''
  \emph{arXiv preprint arXiv:2002.03950}, 2020.

\bibitem{Son2019}
K.~Son, D.~Kim, W.~J. Kang, D.~E. Hostallero, and Y.~Yi, ``{QTRAN}: Learning to
  factorize with transformation for cooperative multi-agent reinforcement
  learning,'' in \emph{International Conference on Machine Learning}, 2019, pp.
  5887--5896.

\bibitem{Sui2020}
Z.~Sui, Z.~Pu, J.~Yi, and S.~Wu, ``Formation control with collision avoidance
  through deep reinforcement learning using model-guided demonstration,''
  \emph{IEEE Transactions on Neural Networks and Learning Systems}, 2020.

\bibitem{wang2020few}
W.~Wang, T.~Yang, Y.~Liu, J.~Hao, X.~Hao, Y.~Hu, Y.~Chen, C.~Fan, and Y.~Gao,
  ``From few to more: Large-scale dynamic multiagent curriculum learning,'' in
  \emph{Proceedings of the AAAI Conference on Artificial Intelligence},
  vol.~34, no.~05, 2020, pp. 7293--7300.

\bibitem{wang2020action}
------, ``Action semantics network: Considering the effects of actions in
  multiagent systems,'' in \emph{International Conference on Learning
  Representations}, 2020.

\bibitem{Tan1993}
M.~Tan, ``Multi-agent reinforcement learning: Independent vs. cooperative
  agents,'' in \emph{Proceedings of the International Conference on Machine
  Learning}, 1993, pp. 330--337.

\bibitem{Mnih2015}
V.~Mnih, K.~Kavukcuoglu, D.~Silver, A.~A. Rusu, J.~Veness, M.~G. Bellemare,
  A.~Graves, M.~Riedmiller, A.~K. Fidjeland, G.~Ostrovski \emph{et~al.},
  ``Human-level control through deep reinforcement learning,'' \emph{nature},
  vol. 518, no. 7540, pp. 529--533, 2015.

\bibitem{Zhou2019}
M.~Zhou, Y.~Chen, Y.~Wen, Y.~Yang, Y.~Su, W.~Zhang, D.~Zhang, and J.~Wang,
  ``Factorized {Q}-learning for large-scale multi-agent systems,'' in
  \emph{Proceedings of the First International Conference on Distributed
  Artificial Intelligence}, 2019, pp. 1--7.

\bibitem{zhu2020online}
Y.~Zhu and D.~Zhao, ``Online minimax {Q} network learning for two-player
  zero-sum markov games,'' \emph{IEEE Transactions on Neural Networks and
  Learning Systems}, 2020.

\bibitem{Han2019}
L.~Han, P.~Sun, Y.~Du, J.~Xiong, Q.~Wang, X.~Sun, H.~Liu, and T.~Zhang,
  ``Grid-wise control for multi-agent reinforcement learning in video game
  ai,'' in \emph{International Conference on Machine Learning}, 2019, pp.
  2576--2585.

\bibitem{Peng2017}
P.~Peng, Y.~Wen, Y.~Yang, Q.~Yuan, Z.~Tang, H.~Long, and J.~Wang, ``Multiagent
  bidirectionally-coordinated nets: Emergence of human-level coordination in
  learning to play {StarCraft} combat games,'' \emph{arXiv preprint
  arXiv:1703.10069}, 2017.

\bibitem{Qin2018}
J.~Qin, M.~Li, Y.~Shi, Q.~Ma, and W.~X. Zheng, ``Optimal synchronization
  control of multiagent systems with input saturation via off-policy
  reinforcement learning,'' \emph{IEEE Transactions on Neural Networks and
  Learning Systems}, vol.~30, no.~1, pp. 85--96, 2019.

\bibitem{Sun2020}
C.~Sun, W.~Liu, and L.~Dong, ``Reinforcement learning with task decomposition
  for cooperative multiagent systems,'' \emph{IEEE Transactions on Neural
  Networks and Learning Systems}, 2020.

\bibitem{NIPS2017_7217}
R.~Lowe, Y.~Wu, A.~Tamar, J.~Harb, P.~Abbeel, and I.~Mordatch, ``Multi-agent
  actor-critic for mixed cooperative-competitive environments,'' in
  \emph{Proceedings of the 31st International Conference on Neural Information
  Processing Systems}, 2017, pp. 6382--6393.

\bibitem{Lillicrap2016}
T.~P. Lillicrap, J.~J. Hunt, A.~Pritzel, N.~Heess, T.~Erez, Y.~Tassa,
  D.~Silver, and D.~Wierstra, ``Continuous control with deep reinforcement
  learning,'' in \emph{International Conference on Learning Representations},
  2016.

\bibitem{Wang2021rode}
T.~Wang, T.~Gupta, A.~Mahajan, B.~Peng, S.~Whiteson, and C.~Zhang, ``{RODE}:
  Learning roles to decompose multi-agent tasks,'' in \emph{International
  Conference on Learning Representations}, 2021.

\bibitem{Yang2020b}
Y.~Yang, J.~Hao, B.~Liao, K.~Shao, G.~Chen, W.~Liu, and H.~Tang, ``Qatten: A
  general framework for cooperative multiagent reinforcement learning,''
  \emph{arXiv preprint arXiv:2002.03939}, 2020.

\bibitem{Stroock2015}
F.~A. Oliehoek, C.~Amato \emph{et~al.}, \emph{A Concise Introduction to
  Decentralized POMDPs}.\hskip 1em plus 0.5em minus 0.4em\relax Springer, 2016,
  vol.~1.

\bibitem{Clevert2015}
D.-A. Clevert, T.~Unterthiner, and S.~Hochreiter, ``Fast and accurate deep
  network learning by exponential linear units,'' in \emph{International
  Conference on Learning Representations}, 2016.

\bibitem{Cho2014}
K.~Cho, B.~van Merrienboer, D.~Bahdanau, and Y.~Bengio, ``On the properties of
  neural machine translation: Encoder-decoder approaches,'' in \emph{Workshop
  on Syntax, Semantics and Structure in Statistical Translation}, 2014.

\bibitem{Samvelyan2019}
M.~Samvelyan, T.~Rashid, C.~Schroeder~de Witt, G.~Farquhar, N.~Nardelli, T.~G.
  Rudner, C.-M. Hung, P.~H. Torr, J.~Foerster, and S.~Whiteson, ``The
  {StarCraft} multi-agent challenge,'' in \emph{Proceedings of the 18th
  International Conference on Autonomous Agents and MultiAgent Systems}, 2019,
  pp. 2186--2188.

\bibitem{papoudakis2020comparative}
G.~Papoudakis, F.~Christianos, L.~Sch{\"a}fer, and S.~V. Albrecht,
  ``Comparative evaluation of multi-agent deep reinforcement learning
  algorithms,'' \emph{arXiv preprint arXiv:2006.07869}, 2020.

\end{thebibliography}

\begin{IEEEbiography}[{\includegraphics[width=1in,height=1.25in,clip,keepaspectratio]{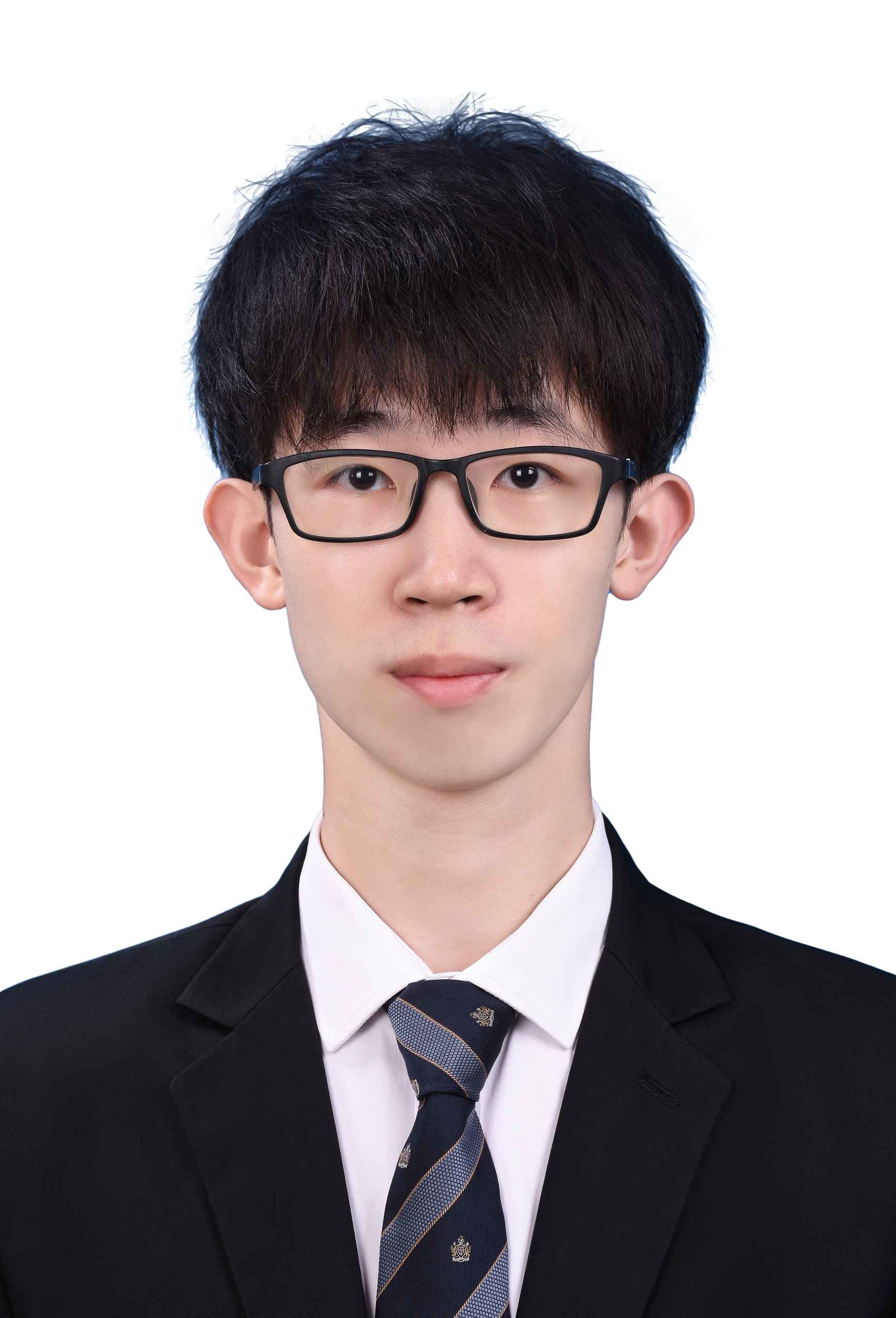}}]{Jiajun Chai} 
received the B.S degree from the Faculty of Electronic and Information Engineering, Xi’an Jiaotong University, Xi’an, China, in 2020. He is currently pursuing the Ph.D. degree with the State Key Laboratory of Management and Control for Complex Systems, Institute of Automation, Chinese Academy of Sciences, Beijing, China. 

His current research interests include multi-agent reinforcement learning, deep learning, and game AI. 
\end{IEEEbiography}

\begin{IEEEbiography}[{\includegraphics[width=1in,height=1.25in,clip,keepaspectratio]{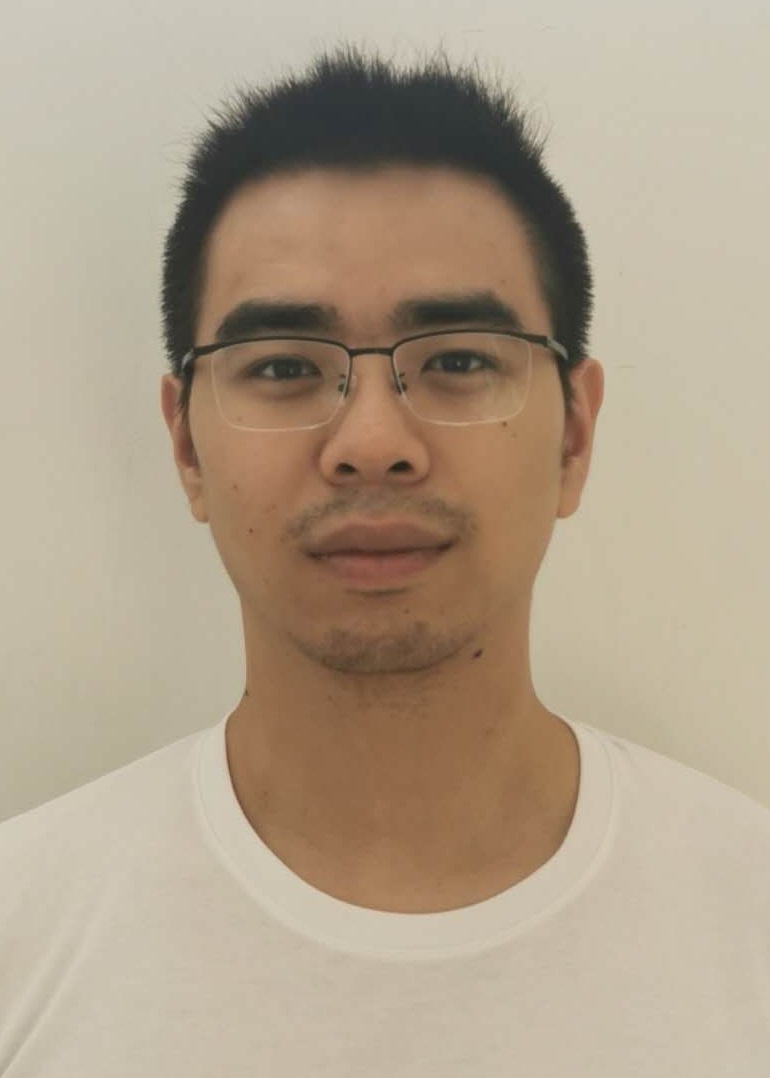}}]{Weifan Li} 
received the M.E degree in the Mechanical engineering and automation, Fuzhou University, Fuzhou, China, in 2018. He is currently pursuing the Ph.D. degree in control theory and control engineering at the State Key Laboratory of Management and Control for Complex Systems, Institute of Automation, Chinese Academy of Sciences, Beijing, China. 

His current research interest is deep reinforcement learning.
\end{IEEEbiography}

\begin{IEEEbiography}[{\includegraphics[width=1in,height=1.25in,clip,keepaspectratio]{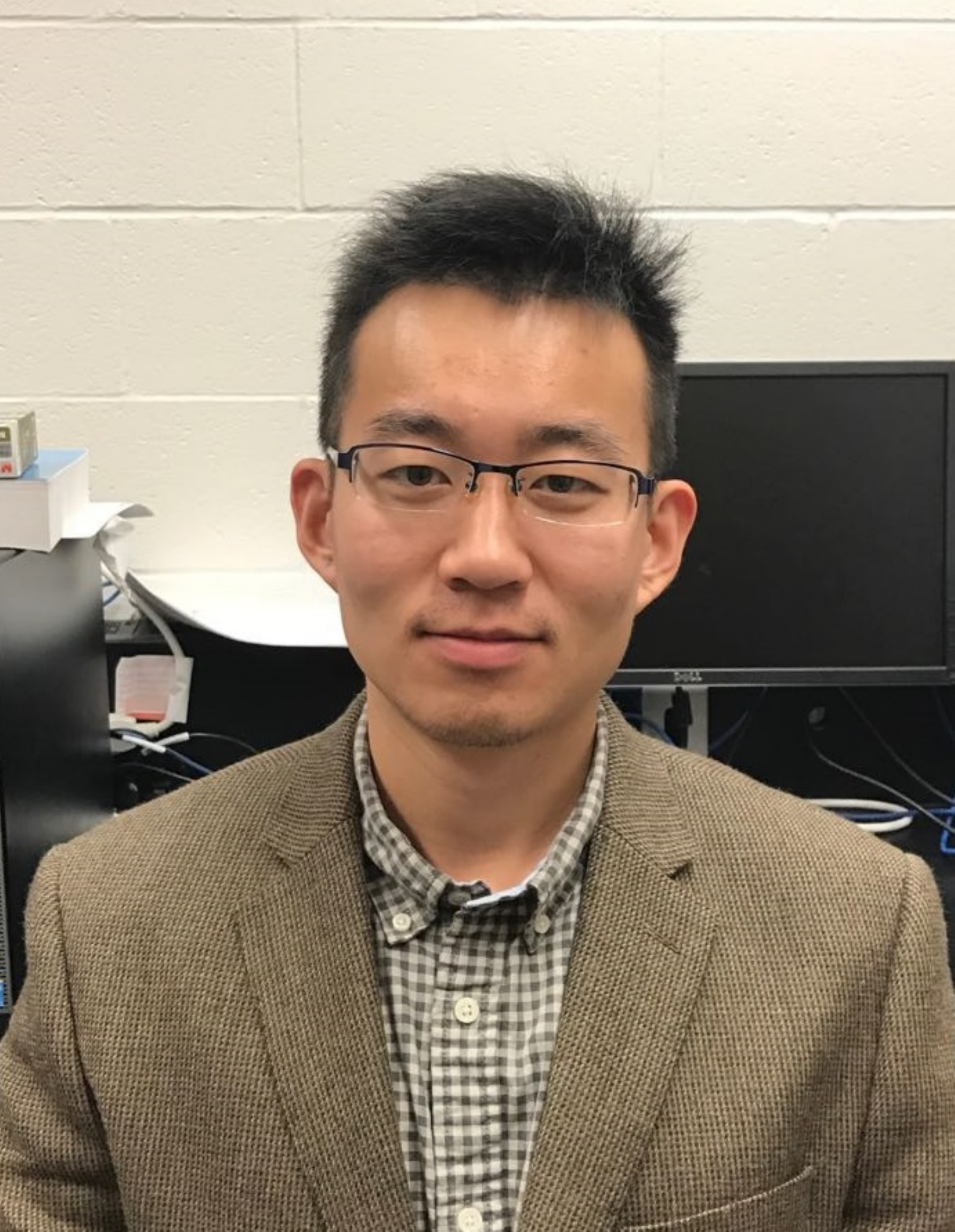}}]{Yuanheng Zhu} 
(M'15) received the B.S. degree in automation from Nanjing University, Nanjing, China, in 2010, and the Ph.D. degree in control theory and control engineering from the Institute of Automation, Chinese Academy of Sciences, Beijing, China, in 2015. He is currently an Associate Professor with the State Key Laboratory of Management and Control for Complex Systems, Institute of Automation, Chinese Academy of Sciences. His research interests include optimal control, adaptive dynamic programming, reinforcement learning, automatic driving, and game intelligence.
\end{IEEEbiography}

\begin{IEEEbiography}[{\includegraphics[width=1in,height=1.25in,clip,keepaspectratio]{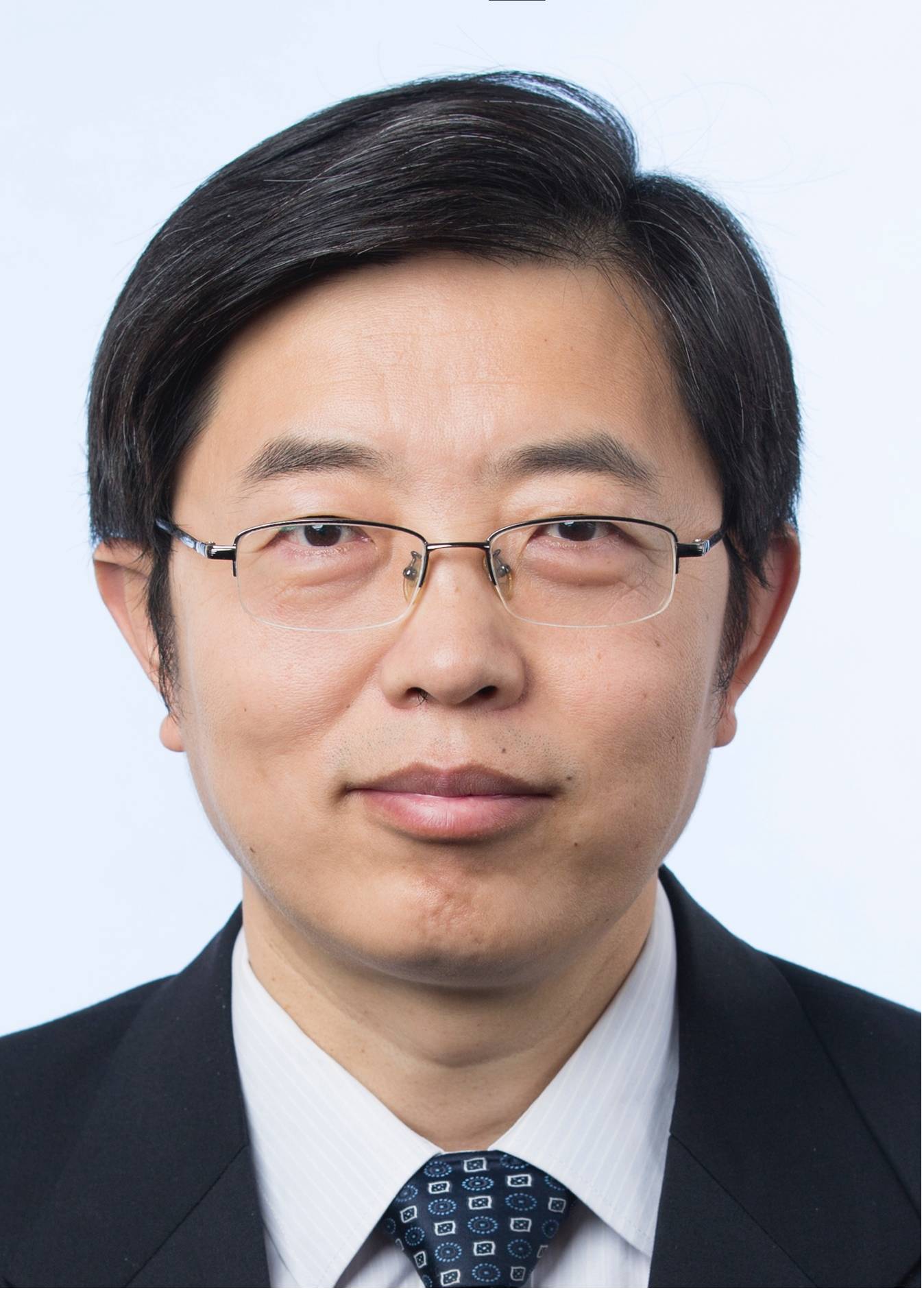}}]{Dongbin Zhao} 
(M’06-SM’10-F’20) received the B.S., M.S., Ph.D. degrees from Harbin Institute of Technology, Harbin, China, in 1994, 1996, and 2000 respectively. He is now a professor with Institute of Automation, Chinese Academy of Sciences, and also with the University of Chinese Academy of Sciences, China. He has published 6 books, and over 100 international journal papers. His current research interests are in the area of deep reinforcement learning, computational intelligence, autonomous driving, game artificial intelligence, robotics, etc.

Dr. Zhao serves as the Associate Editor of IEEE Transactions on Neural Networks and Learning Systems, IEEE Transactions on Cybernetics, IEEE Transactions on Artificial Intelligence, etc. He is the chair of Distinguished Lecture Program of IEEE Computational Intelligence Society (CIS). He is involved in organizing many international conferences. He is an IEEE Fellow.
\end{IEEEbiography}

\begin{IEEEbiography}[{\includegraphics[width=1in,height=1.25in,clip,keepaspectratio]{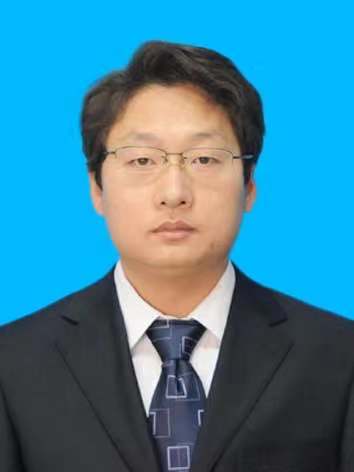}}]{Zhe Ma} 
has received his Ph.D. degree. He is currently a researcher at X-Lab in the Second Academy of CASIC. 

His current research interests include artificial intelligence and SoS. 
\end{IEEEbiography}

\begin{IEEEbiography}[{\includegraphics[width=1in,height=1.25in,clip,keepaspectratio]{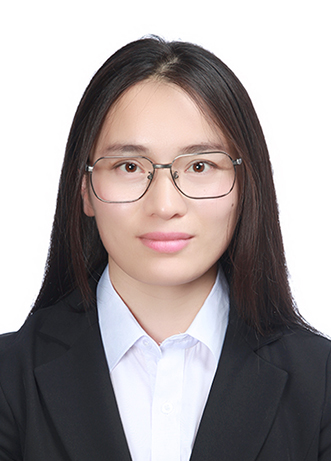}}]{Kewu Sun} 
has received her B.S. and M.S. degree. She is currently a senior engineer at X-Lab in the Second Academy of CASIC. 

Her current research interests include multi-agent reinforcement learning and SoS.
\end{IEEEbiography}

\begin{IEEEbiography}[{\includegraphics[width=1in,height=1.25in,clip,keepaspectratio]{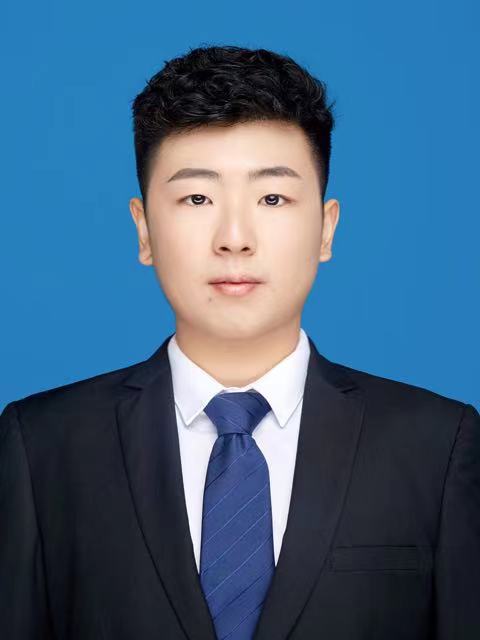}}]{Jishiyu Ding} 
received the B.S. degree from Beijing Jiaotong University in 2015 and Ph.D. degree from Tsinghua University in 2020. He is currently an engineer at X-Lab in The Second Academy of CASIC. 

His current research interest is multi-agent reinforcement learning.
\end{IEEEbiography}

\end{document}